%
%

\documentclass[11pt]{article}
\usepackage{amsfonts,amssymb,amsmath,amsthm}
\usepackage{graphicx}
\usepackage{enumerate}

\usepackage{color} 
   \definecolor{cites}{rgb}{0.50 , 0.00 , 0.00}  
   \definecolor{urls} {rgb}{0.00 , 0.00 , 0.50}  
   \definecolor{links}{rgb}{0.00 , 0.00 , 0.50}   
\usepackage[
      colorlinks=true,   
      citecolor=cites,   
      urlcolor=urls,     
      linkcolor=links,   
      pdftitle={Finite sections of the Fibonacci Hamiltonian},  
      pdfauthor={Marko Lindner, Hagen Söding}, 
      pdfpagemode=UseOutlines, 
      pdfstartview=FitH,       
      bookmarksopen=false      
   ]{hyperref}

\parindent3ex
\parskip1ex

\topmargin-10mm
\textheight230mm
\oddsidemargin0mm
\evensidemargin0mm
\textwidth160mm

\newcommand\eps\varepsilon
\newcommand\ph\varphi

\newcommand\opsp{{\rm Lim}}

\renewcommand\Re{{\rm Re}}
\renewcommand\Im{{\rm Im}}

\newcommand\vekz[2]{{#1\choose #2}}

\newcommand\R{{\mathbb R}}

\newcommand\Z{{\mathbb Z}}
\newcommand\N{{\mathbb N}}
\newcommand\I{{\mathbb I}}

\newcommand\cI{{\mathcal I}}
\newcommand\cK{{\mathcal K}}
\newcommand\cL{{\mathcal L}}
\newcommand\cF{{\mathcal F}}

\newtheorem{theorem}{Theorem}[section]
\newtheorem{lemma}[theorem]{Lemma}
\newtheorem{corollary}[theorem]{Corollary}

\newtheorem{definition}[theorem]{Definition}


\renewenvironment{proof}
 {\par\noindent{\bf Proof.}}
 {\rule{2mm}{2mm}\pagebreak[2]}

\newenvironment{proofof}[1]
 {\par\noindent{\bf Proof of #1.}}
 {\rule{2mm}{2mm}\pagebreak[2]}

\numberwithin{figure}{section}  
\setcounter{MaxMatrixCols}{20}
\makeatletter
\let\@fnsymbol\@arabic
\makeatother

\begin{document}
\title{\bf Finite sections of the Fibonacci Hamiltonian}
\author{
{\sc Marko Lindner}\footnote{Techn. Univ. Hamburg (TUHH), Institut Mathematik, D-21073 Hamburg, Germany, \url{lindner@tuhh.de}}
\quad and\quad 
{\sc Hagen S\"oding}\footnote{\url{hagen.soeding@tuhh.de}}
}

\date{\today}
\maketitle
\begin{quote}
\renewcommand{\baselinestretch}{1.0}
\footnotesize {\sc Abstract.}
We study finite but growing principal square submatrices $A_n$ of the one- or two-sided infinite Fibonacci Hamiltonian $A$. 
Our results show that such a sequence $(A_n)$, no matter how the points of truncation are chosen, is always stable -- implying that $A_n$ is invertible for sufficiently large $n$ and $A_n^{-1}\to A^{-1}$ pointwise.
\end{quote}

\noindent
{\it Mathematics subject classification (2010):} 65J10; Secondary 47A35, 47B36.\\
{\it Keywords and phrases:} finite section method, Fibonacci Hamiltonian, Jacobi operator, limit operators
\section{Introduction}
The 1D Schr\"odinger operator $-\Delta+b\cdot$ with a bounded potential $b\in L^\infty(\R)$
can be discretized, via finite differences on a uniform grid on $\R$, by the second order difference operator
\begin{equation} \label{eq:discrSchr}
 (Ax)_n = x_{n-1}+v_nx_n+x_{n+1},\qquad n\in\Z,
\end{equation}
acting on a sequence space like $\ell^p(\Z)$.
The discrete potential $v=(v_n)\in\ell^\infty(\Z)$ corresponds to evaluations of the potential $b$ on the grid (subtracted by a two
that comes from the discretization of the Laplace operator). $A$ is commonly referred to as a discrete 1D Schr\"odinger operator.

A particularly beautiful example, the so-called {\sl Fibonacci Hamiltonian}, arises when the discrete potential $v$
is given by the formula
\begin{equation} \label{eq:mod1}
 v_n = \chi_{[1-\alpha,1)}(n\alpha\!\!\!\mod 1),\qquad n\in\Z,
\end{equation}
where $\alpha=\tfrac{\sqrt 5-1}2$ is the golden ratio and $\chi_\cI$ is the characteristic function of an interval $\cI$.

The sequence $v$ from \eqref{eq:mod1} is not periodic (as $\alpha$ is irrational); it displays a so-called
quasiperiodic pattern. Here are its values $v_1,\dots,v_{55}$ and three attempts to identify basic building blocks 
of the sequence (one with normal/bold face, one with separation by minus signs and one with under/overlines): 
\[
\underline{{\bf 10}110\text-{\bf 10}110\text-110}\text-\underline{{\bf 10}110\text-{\bf 10}110\text-110}\overline{\text-{\bf 10}110\text-110}\text-\underline{{\bf 10}110\text-{\bf 10}110\text-110}\overline{\text-{\bf 10}110\text-101}.
\]
The global pattern of these building blocks (on each scale) is the same as the pattern formed by $1$ and $0$ on the finest scale. The Fibonacci potential shows self-similarity on many levels.

The Fibonacci Hamiltonian is the standard model in 1D for physical properties of so-called quasicrystals
and is therefore heavily studied in mathematical physics.
Most of the research deals with the spectrum of $A$, which is a Cantor set of measure zero without
any eigenvalues (purely singular continuous spectrum).

Our focus is different. The operator \eqref{eq:discrSchr} acts via multiplication
with a two-sided infinite tridiagonal matrix $(a_{ij})_{i,j\in\Z}$. The main diagonal carries the
sequence $v$, the first sub- and super-diagonal are constant to one, and the rest is zero.
We study the applicability of the so-called finite section method to that infinite matrix.

The {\sl finite section method (FSM)} looks at finite submatrices
\[
A_n = (a_{ij})_{i,j=l_n}^{r_n},\qquad n\in\N
\]
of an infinite matrix $A=(a_{ij})_{i,j\in\Z}$ with integer cut-off points
\[
l_n\to -\infty\qquad\text{and}\qquad r_n\to +\infty
\]
and asks whether
\begin{itemize}\itemsep0mm
\item[\bf a)] the matrices $A_n$ are invertible for all sufficiently large $n$ and
\item[\bf b)] their inverses (after embedding them into an infinite matrix again) converge pointwise
in $\ell^p(\Z)$ to the inverse of $A$.
\end{itemize}
Assuming invertibility of $A$ on $\ell^p(\Z)$, property {\bf b)} is equivalent to the uniform boundedness 
of the inverses $A_n^{-1}$. 

As a consequence, one can solve an infinite system $Ax=b$ approximately by solving
large but finite systems $A_nx_n=b_n$. For one-sided infinite matrices $A=(a_{ij})_{i,j\in\N}$,
all of the above remains true but $l_n$ should be fixed at $1$.

\medskip

{\bf Our result.}
For the Fibonacci Hamiltonian $A$ from \eqref{eq:discrSchr}
with potential \eqref{eq:mod1} as well as for its one-sided infinite submatrix
$A_\N:=(a_{ij})_{i,j\in\N}$, 
we first show that both operators are invertible
on every space $\ell^p(\Z)$, resp. $\ell^p(\N)$, with $p\in[1,\infty]$
before proving that the FSM with arbitrary cut-off points is applicable 
for $A$ as well as for $A_\N$. 

\medskip

{\bf Historic remarks. } 
Quasicrystals are materials that show features of periodicity (so-called 
Bragg peaks in diffraction experiments) and aperiodicity (symmetries
that rule out periodicity) at the same time -- so-called quasiperiodicity.
They have first been observed by D. Shechtman in 1982 in his laboratory \cite{Steinhardt}
but are meanwhile also found to be ocurring in nature. Physicists and 
mathematicians quickly developed an interest in this topic. In particular 
the spectrum of the corresponding Hamiltonian is of huge interest 
for the understanding of electrical properties of the material \cite{BaakeGrimm,aperiodic,Stollmann}. 
The most famous quasiperiodic ensemble in 2D is the Penrose tiling \cite{Bruijn}.
The understanding of the corresponding spectrum seems however
currently out of reach, so that one resorts to 1D ensembles, the most 
common of which is the Fibonacci sequence \eqref{eq:mod1}.
A detailed history and the current state of the art on the extensively studied
spectral analysis of the Fibonacci Hamiltonian can be found in \cite{DamaEmbrGoro}.

The idea of the FSM is so natural that it
is difficult to give a historical starting point. First rigorous
treatments are from Baxter \cite{Baxter} and Gohberg \& Feldman
\cite{GohbergFeldman} on Wiener-Hopf and convolution operators in
dimension $N=1$ in the early 1960's. For convolution equations in
higher dimensions $N\ge 2$, the FSM goes back to Kozak \& Simonenko
\cite{Kozak,KozakSimonenko}, and for general band-dominated
operators with scalar \cite{RaRoSi1998} and operator-valued
\cite{RaRoSi2001,RaRoSi2001:FSM} coefficients, most results are due
to Rabinovich, Roch \& Silbermann. For the state of the art see e.g.
\cite{BoeGru,Roch:FSM,Li:Habil,Sei:Diss}.

\section{The finite section method}
As usual, for an index set $\I\subset\Z$, let $\ell^p(\I)$ denote the set of all complex sequences 
$(x_k)_{k\in\I}$ with $\sum_{k\in\I}|x_k|^p<\infty$ for $p\in[1,\infty)$, and $\ell^\infty(\I)$ be the 
set of all bounded complex sequences over $\I$.

Let $A=(a_{ij})_{i,j\in\Z}$ be a band matrix (i.e.~a matrix with only finitely many nonzero diagonals)
with uniformly bounded complex entries. Then $A$ acts, via matrix-vector multiplication, as a
bounded linear operator on all spaces $\ell^p(\Z)$ with $p\in[1,\infty]$.
Denote that operator again by $A$.

For integer cut-off points $l_1,l_2,\dots$ and $r_1,r_2,\dots$ with
\[
l_n\to -\infty\qquad\text{and}\qquad r_n\to +\infty,
\]
look at the finite submatrices
\begin{equation} \label{eq:An}
A_n = (a_{ij})_{i,j=l_n}^{r_n},\qquad n\in\N
\end{equation}
of $A$ and call the sequence $(A_n)_{n\in\N}$ {\sl stable} if there exists an $n_0\in\N$
such that $A_n$ is invertible for all $n\ge n_0$ and $\sup_{n\ge n_0}\|A_n^{-1}\|<\infty$.

Invertibility of $A$ and stability of $(A_n)$ together are sufficient and necessary for the {\sl applicability} of $(A_n)$,
that is the pointwise convergence
(i.e.~column-wise convergence of the matrices) $A_n^{-1}\to A^{-1}$,
when $A_n^{-1}$ is extended to an infinite matrix again.
This approximation of $A^{-1}$ can be used for solving equations $Ax=b$ approximately
via the solutions of growing finite systems.

We see that it is crucial to know about the stability of $(A_n)$. This stability is closely
connected to a family of one-sided infinite matrices that are associated to $A$ and to the cut-off sequences $(l_{n})$ and $(r_{n})$. Those associated one-sided infinite matrices are partial limits 
of the upper left and the lower right corner of the finite matrix $A_{n}$ as $n\to\infty$. Precisely, the associated matrices are the entrywise limits
\begin{equation} \label{eq:ass_matrix}
(a_{i+l_n',j+l_n'})_{i,j=0}^\infty\ \to\ B_+\qquad\textrm{and}\qquad (a_{i+r_n',j+r_n'})_{i,j=-\infty}^0\ \to\ C_-\qquad\textrm{as}\ n\to\infty
\end{equation}
of one-sided infinite submatrices of $A$, where $(l_n')_{n=1}^\infty$ and $(r_n')_{n=1}^\infty$ are subsequences of $(l_n)_{n=1}^\infty$ and $(r_n)_{n=1}^\infty$, respectively, such that the limits \eqref{eq:ass_matrix} exist. 
The boundedness of the diagonals of $A$ ensures (by Bolzano-Weierstrass and a Cantor diagonal argument) 
the existence of such subsequences and the corresponding limits \eqref{eq:ass_matrix}.
Here is the result.
\begin{lemma}\label{lem:FSMappl}
[Lemma 1.2 of \cite{CWLi:Coburn}]
For a band matrix $A=(a_{ij})_{i,j\in\Z}$ and two cut-off sequences
$(l_n)_{n=1}^\infty$ and $(r_n)_{n=1}^\infty$ in $\Z$ with $l_n\to-\infty$ and $r_n\to+\infty$, the following are equivalent:\\[-7mm]
\begin{enumerate}[\hspace{5mm}(i)]\itemsep-1mm
\item the FSM $(A_n)_{n=1}^\infty$ with $A_n$ from \eqref{eq:An} is applicable to $A$,
\item the FSM $(A_n)_{n=1}^\infty$ with $A_n$ from \eqref{eq:An} is stable,
\item $A$ and the limits $B_+$ and $C_-$ from \eqref{eq:ass_matrix} are invertible for all
subsequences $(l_n')$ of $(l_n)$ and $(r_n')$ of $(r_n)$.
\end{enumerate}
\end{lemma}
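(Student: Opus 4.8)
The plan is to establish the two substantive equivalences (ii)$\Leftrightarrow$(iii); the equivalence (i)$\Leftrightarrow$(ii) is then the standard approximation fact recalled in the text above (given invertibility of $A$, applicability of $(A_n)$ amounts to uniform boundedness of the inverses $A_n^{-1}$), together with the observation --- part of the argument below --- that stability of $(A_n)$ forces $A$ to be invertible. It is convenient to replace $A_n$ by the operator $B_n:=P_nAP_n+(I-P_n)$ on $\ell^p(\Z)$, where $P_n$ is the coordinate projection onto $\{l_n,\dots,r_n\}$; as $B_n$ is block diagonal with blocks $A_n$ and the identity, $\|B_n^{-1}\|=\max(\|A_n^{-1}\|,1)$, so $(A_n)$ is stable iff $(B_n)$ is. I run the argument for $p\in(1,\infty)$ and obtain the endpoint values $p\in\{1,\infty\}$ from the standard transfer of stability and of band-operator invertibility across the scale of $\ell^p$-spaces (see \cite{RaRoSi1998,Roch:FSM}). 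Write $w$ for the band width of $A$ and $V_k$ for the shift $(V_kx)_j=x_{j-k}$.

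\emph{(ii)$\Rightarrow$(iii).} Suppose $\|B_n^{-1}\|\le C$ for $n\ge n_0$. Since $P_n\to I$ strongly, and likewise on the predual, both $B_n\to A$ and $B_n^*\to A^*$ strongly. A sequence that converges strongly, together with its adjoints, and whose inverses are uniformly bounded by $C$ has an invertible limit: the limit and its adjoint are bounded below by $1/C$ (from $\|x\|=\|B_n^{-1}B_nx\|\le C\|B_nx\|\to C\|Ax\|$), which gives injectivity with closed and dense range. Hence $A$ is invertible. For $B_+$, fix a subsequence $(l_n')$ of $(l_n)$ realising the first limit in \eqref{eq:ass_matrix}; then $V_{-l_n'}B_nV_{l_n'}$ is again a stable sequence and, on $\ell^p(\Z)=\ell^p(\{0,1,\dots\})\oplus\ell^p(\{\dots,-2,-1\})$ with the second summand carried by the identity, it equals $Q_{m_n}\widehat A_nQ_{m_n}+(I-Q_{m_n})$, where $Q_m$ projects onto $\{0,\dots,m\}$, $m_n:=r_n'-l_n'\to\infty$, and $\widehat A_n:=(a_{i+l_n',j+l_n'})_{i,j\ge0}$. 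By \eqref{eq:ass_matrix} and the fixed band width, $\widehat A_n\to B_+$ and $\widehat A_n^*\to B_+^*$ strongly, while $Q_{m_n}\to I$ strongly; so this stable sequence converges, together with its adjoints, to $B_+$ (padded by the identity), which is therefore invertible. The operator $C_-$ is treated symmetrically, conjugating by $V_{-r_n'}$.

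\emph{(iii)$\Rightarrow$(ii).} Suppose $(A_n)$ is not stable. From $\|A_n^{-1}\|=1/\inf\{\|A_nx\|_p:\|x\|_p=1\}$ (with $1/0:=\infty$) we get a subsequence $(n_k)$ and unit vectors $x_k$ supported in $\{l_{n_k},\dots,r_{n_k}\}$ with $\|A_{n_k}x_k\|_p\to0$. Since $A_{n_k}x_k=P_{n_k}Ax_k$ and $A$ has band width $w$, the vector $Ax_k-A_{n_k}x_k=(I-P_{n_k})Ax_k$ is supported within distance $w$ of the two cut-off points and has norm at most a constant times the mass of $x_k$ on the $2w$ coordinates nearest $l_{n_k}$ and $r_{n_k}$; if that mass tends to $0$, then $\|Ax_k\|_p\to0$ while $\|x_k\|_p=1$, contradicting the invertibility of $A$. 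Otherwise, after passing to a subsequence and relabelling, the mass of $x_k$ near one cut-off point --- say the left one, $l_{n_k}$ --- stays bounded below by some $\delta>0$, and, after a further subsequence, $a_{i+l_{n_k},j+l_{n_k}}\to b_{ij}$ for all $i,j\ge0$, which defines an operator $B_+$ as in \eqref{eq:ass_matrix} (the right-hand alternative leads symmetrically to a $C_-$). It remains to contradict the invertibility of $B_+$. For each large integer $M$ choose a cut-off sequence $\phi_M:\Z\to[0,1]$ that equals $1$ on the coordinates within distance $M$ of $l_{n_k}$ (so in particular $\|\phi_Mx_k\|_p\ge\delta$), vanishes beyond distance $2M$ of $l_{n_k}$, and has Lipschitz constant $1/M$. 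Then $[A,\phi_M]$ is a band matrix of width $w$ with entries of size $O(1/M)$, hence $\|[A,\phi_M]\|_{\ell^p\to\ell^p}=O(1/M)$, and therefore $\|A_{n_k}(\phi_Mx_k)\|_p\le\|A_{n_k}x_k\|_p+O(1/M)$. As $\phi_Mx_k$ sits in a window of length $2M$ (independent of $k$) at the left cut-off point and $m_k:=r_{n_k}-l_{n_k}\to\infty$, for large $k$ one has $V_{-l_{n_k}}A_{n_k}(\phi_Mx_k)=\widehat A_kv_k^M$ with $\widehat A_k:=(a_{i+l_{n_k},j+l_{n_k}})_{i,j\ge0}$ and $v_k^M:=V_{-l_{n_k}}(\phi_Mx_k)\in\ell^p(\{0,\dots,2M\})$, $\|v_k^M\|_p\ge\delta$. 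Finite-dimensionality lets a subsequence of $(v_k^M)_k$ converge in norm to some $v^M$ with $\|v^M\|_p\ge\delta$, and since $\widehat A_k\to B_+$ entrywise on the finitely many relevant matrix positions (which also covers $p\in\{1,\infty\}$ at this step) with $\|\widehat A_k\|$ bounded, $\widehat A_kv_k^M\to B_+v^M$; hence $\|B_+v^M\|_p=O(1/M)$, so $\|B_+^{-1}\|\ge\delta/O(1/M)$, and letting $M\to\infty$ contradicts the invertibility of $B_+$.

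I expect this last step to be the main obstacle: the witnessing vectors $x_k$ need neither concentrate near a single index nor remain within a bounded part of $\Z$, so one cannot simply pass to a weak limit and read off a non-invertibility certificate for one of $A$, $B_+$, $C_-$. The combination of the slowly varying cut-off $\phi_M$ with the commutator estimate $\|[A,\phi_M]\|=O(1/M)$ is what overcomes this: it confines the (possibly spread-out) mass of $x_k$ to a fixed-size window at a cut-off point, at the price of an error that vanishes as $M\to\infty$, after which a harmless finite-dimensional compactness argument produces the certificate. A secondary technical point is the passage to $p\in\{1,\infty\}$ in (ii)$\Rightarrow$(iii), where the reflexivity and strong-convergence comforts used there are unavailable and one falls back on the same $\ell^p$-transfer for band operators; note that the direction (iii)$\Rightarrow$(ii) above is in fact valid for every $p\in[1,\infty]$ as written.
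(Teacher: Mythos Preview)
The paper does not prove this lemma; it merely quotes it from \cite{CWLi:Coburn} (it is explicitly labelled ``Lemma~1.2 of \cite{CWLi:Coburn}'') and uses it as a black box. There is therefore no ``paper's own proof'' to compare your attempt against.

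That said, your direct argument is sound and self-contained, and it is worth recording how it differs from the more algebraic proofs in the literature. The references behind \cite{CWLi:Coburn} (going back to \cite{RaRoSi1998,RaRoSi2001:FSM,Roch:FSM}) typically encode the sequence $(A_n)$ as a single block-diagonal operator and appeal to the limit-operator/Fredholm machinery (essentially Lemma~\ref{lem:Fredholm}) together with a lifting theorem for approximation sequences; your proof avoids this apparatus entirely. For (ii)$\Rightarrow$(iii) you use only the elementary fact that a stable sequence with strong convergence of the sequence and its adjoints has an invertible limit, applied after conjugating by shifts to bring each cut-off to the origin. For (iii)$\Rightarrow$(ii) your commutator/cut-off localisation with the slowly varying $\phi_M$ plus finite-dimensional compactness is the right idea and replaces the $C^*$- or Banach-algebraic localisation principles used in the cited sources. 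The trade-off is that your approach is more transparent for a single band matrix $A$ but does not immediately generalise (e.g.\ to band-dominated operators or operator-valued coefficients), whereas the algebraic route does. Two minor technical comments: in the dichotomy ``mass near the boundaries tends to $0$ / otherwise'' make explicit that in the second case one passes to a subsequence along which the mass near \emph{one fixed} endpoint stays bounded below; and in the (ii)$\Rightarrow$(iii) step, the $p$-transfer you invoke at the end is exactly what the paper quotes from \cite[\S5.2.7]{Kurbatov} in the proof of Theorem~\ref{thm:A}, so you may cite that there as well.
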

So by the choice of the cut-off sequences $(l_{n})$ and $(r_{n})$, one can control the selection of associated matrices $B_{+}$ and $C_-$ 
and hence control the stability and applicability of the FSM. 

The construction of the $B_+$ and $C_-$ brings us to the notion of a 
limit operator \cite{RaRoSi1998,RaRoSiBook,Li:Book}.
\begin{definition}
Let $\I$ be
either $\Z$ or $\N$. For a bounded one- or two-sided infinite band matrix $A=(a_{ij})_{i,j\in\I}$
and a sequence $h_1,h_2,...$ in $\I$ with $|h_n|\to\infty$ we say that $B=(b_{ij})_{i,j\in\Z}$ is
a {\sl limit operator} of $A$ if, for all $i,j\in\Z$,
\begin{equation}\label{eq:limop}
a_{i+h_n,j+h_n}\to b_{ij}\quad\textrm{as}\quad n\to\infty.
\end{equation}
We write $A_h$ instead of $B$. 
\end{definition}

Note that limit operators are always given by a two-sided infinite matrix, no matter if the matrix $A$ to start with is one- or two-sided infinite. 

So in this language, our associated matrices $B_{+}$ and $C_-$ from \eqref{eq:ass_matrix} are one-sided truncations of limit operators of $A$: each $B_+=(b_{ij})_{i,j=0}^\infty$ is a submatrix of a limit operator $B=(b_{ij})_{i,j\in\Z}$
of $A$ w.r.t.~a subsequence $h$ of $(l_n)$, and each $C_-=(c_{ij})_{i,j=-\infty}^0$ is a submatrix of a limit operator $C=(c_{ij})_{i,j\in\Z}$ of $A$ w.r.t.~a subsequence $h$ of $(r_n)$.
To be able to rephrase Lemma \ref{lem:FSMappl} in that language, we introduce
the following notations.

\begin{definition} \label{def:ab}
{\bf a)} For a bounded one- or two-sided infinite band matrix $A=(a_{ij})_{i,j\in\I}$ with $\I\in\{\Z,\N\}$
and a sequence $g_1,g_2,...$ in $\I$ with $|g_n|\to\infty$ we write $\opsp_g(A)$ for the set
of all limit operators $A_h$ with respect to a subsequence $h$ of $g$, and we write $\opsp(A)$
for the set of all limit operators of $A$. 
Moreover, put $\opsp_+(A):=\opsp_{(1,2,3,\dots)}(A)$ and
$\opsp_-(A):=\opsp_{(-1,-2,-3,\dots)}(A)$.

{\bf b)} For a two-sided infinite matrix $A=(a_{ij})_{i,j\in\Z}$, write
$A_\pm:=(a_{ij})_{i,j\in\Z_\pm}$, respectively,
where $\Z_-:=\{\dots,-2,-1,0\}$ and $\Z_+:=\{0,1,2,\dots\}$.
\end{definition}
Note that $A_+$ and $A_-$ overlap in $a_{00}$. Here is the announced reformulation of 
Lemma \ref{lem:FSMappl}.
%
\begin{corollary}\label{cor:FSMappl}
For a bounded band matrix $A=(a_{ij})_{i,j\in\Z}$ and two cut-off sequences $l=(l_n)_{n=1}^\infty$ 
and $r=(r_n)_{n=1}^\infty$ in $\Z$ with $l_n\to-\infty$ and $r_n\to+\infty$, the following are equivalent:\\[-7mm]
\begin{enumerate}[\hspace{5mm}(i)]\itemsep-1mm
\item[(i)] the FSM $(A_n)_{n=1}^\infty$ with $A_n$ from \eqref{eq:An} is applicable to $A$,
\item[(ii)] the FSM $(A_n)_{n=1}^\infty$ with $A_n$ from \eqref{eq:An} is stable,
\item[(iii)] $A$ and all operators $B_+$ and $C_-$ with $B\in\opsp_l(A)$ and $C\in\opsp_r(A)$ are invertible.
\end{enumerate}
\end{corollary}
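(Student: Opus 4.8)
The plan is to derive Corollary \ref{cor:FSMappl} directly from Lemma \ref{lem:FSMappl} by a translation of vocabulary, so essentially all the analytic content is already in place; what remains is to match the objects $B_+,C_-$ from \eqref{eq:ass_matrix} with the one-sided truncations of limit operators appearing in item (iii). First I would observe that (i)$\Leftrightarrow$(ii) is literally the equivalence of the same two items in Lemma \ref{lem:FSMappl}, so nothing is to be done there. The work is therefore entirely in establishing (ii)$\Leftrightarrow$(iii), and for that it suffices to show that the collection of operators $\{A\}\cup\{B_+ : B\in\opsp_l(A)\}\cup\{C_- : C\in\opsp_r(A)\}$ in (iii) coincides with the collection $\{A\}\cup\{B_+,C_-\text{ from }\eqref{eq:ass_matrix}\text{ over all admissible subsequences}\}$ in item (iii) of Lemma \ref{lem:FSMappl}.

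The key step is the bookkeeping identity between the two indexings. On one side, item (iii) of Lemma \ref{lem:FSMappl} ranges over all subsequences $(l_n')$ of $(l_n)$ for which the entrywise limit $(a_{i+l_n',j+l_n'})_{i,j=0}^\infty\to B_+$ exists; on the other side, $\opsp_l(A)$ is by Definition \ref{def:ab}a the set of limit operators $A_h=B$ taken along subsequences $h$ of $l$, and then one passes to the truncation $B_+=(b_{ij})_{i,j\in\Z_+}$. I would spell out that a subsequence $h=(l_n')$ makes the full two-sided limit $a_{i+l_n',j+l_n'}\to b_{ij}$ exist for all $i,j\in\Z$ if and only if — after thinning to a further subsequence, which is always possible by Bolzano--Weierstrass and the Cantor diagonal argument already invoked in the paragraph before Lemma \ref{lem:FSMappl} since $A$ has boundedly many bounded diagonals — every ``corner'' limit in \eqref{eq:ass_matrix} exists; and when both exist along the same subsequence, the $\Z_+$-truncation of the limit operator $B$ is exactly the matrix $B_+$ of \eqref{eq:ass_matrix}. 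Thinning a subsequence does not enlarge or shrink the relevant sets here: any $B_+$ arising in Lemma \ref{lem:FSMappl}(iii) from some $(l_n')$ also arises, after one further thinning, as the truncation of a genuine limit operator, hence lies in $\{B_+:B\in\opsp_l(A)\}$; conversely every $B\in\opsp_l(A)$ is by definition a limit along some subsequence of $l$, so its truncation $B_+$ is an admissible choice in Lemma \ref{lem:FSMappl}(iii). The identical argument with $l$ replaced by $r$, $\Z_+$ by $\Z_-$, and $B_+$ by $C_-$ handles the right-hand corner. Substituting these two set equalities into Lemma \ref{lem:FSMappl} turns its item (iii) verbatim into item (iii) of the corollary, completing the proof.

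The one point that needs a sentence of care — and which I regard as the only real obstacle, though a minor one — is the ``for all subsequences'' quantifier: in Lemma \ref{lem:FSMappl}(iii) one must check invertibility of $B_+$ and $C_-$ for \emph{every} admissible pair of subsequences, while $\opsp_l(A)$ and $\opsp_r(A)$ are defined via \emph{existence} of limits along subsequences. The resolution is exactly the thinning remark above: since from any subsequence along which the corner limits exist one can extract a further subsequence along which the whole limit operator exists (and the corner limits are unchanged under further thinning), the family of $B_+$'s appearing in the lemma is identical to $\{B_+:B\in\opsp_l(A)\}$, and likewise on the right. I would state this as the proof and keep it to a single short paragraph, since every ingredient — the definitions, the Cantor diagonal extraction, and Lemma \ref{lem:FSMappl} itself — is already available in the text.

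\begin{proof}
The equivalence (i)$\Leftrightarrow$(ii) is contained in Lemma \ref{lem:FSMappl}. It remains to identify condition (iii) here with condition (iii) of Lemma \ref{lem:FSMappl}. Let $(l_n')$ be a subsequence of $(l_n)$ for which the corner limit $(a_{i+l_n',j+l_n'})_{i,j=0}^\infty\to B_+$ in \eqref{eq:ass_matrix} exists. Since $A$ has only finitely many nonzero diagonals, each bounded, Bolzano--Weierstrass together with a Cantor diagonal argument yields a further subsequence $h$ of $(l_n')$ along which $a_{i+h_k,j+h_k}$ converges for \emph{all} $i,j\in\Z$; the limit is a limit operator $B=A_h\in\opsp_l(A)$, and its $\Z_+$-truncation equals the matrix $B_+$ above, because passing to a subsequence does not alter an already existing limit. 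Hence every $B_+$ occurring in Lemma \ref{lem:FSMappl}(iii) lies in $\{B_+:B\in\opsp_l(A)\}$. Conversely, if $B=A_h\in\opsp_l(A)$ with $h$ a subsequence of $l$, then in particular the corner limit \eqref{eq:ass_matrix} exists along $h$ and produces exactly $B_+$, so $B_+$ is one of the matrices quantified over in Lemma \ref{lem:FSMappl}(iii). Thus the set of left corner matrices in Lemma \ref{lem:FSMappl}(iii) equals $\{B_+:B\in\opsp_l(A)\}$. The same reasoning, with $l$ replaced by $r$, the index set $\Z_+$ replaced by $\Z_-$, and the matrix $B_+$ replaced by $C_-$, shows that the set of right corner matrices in Lemma \ref{lem:FSMappl}(iii) equals $\{C_-:C\in\opsp_r(A)\}$. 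Substituting these two identifications into Lemma \ref{lem:FSMappl}, its condition (iii) becomes condition (iii) of the present corollary, and the equivalence with (i) and (ii) follows.
\end{proof}
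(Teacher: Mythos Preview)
Your proof is correct and follows exactly the approach the paper intends: the corollary is presented there as a direct reformulation of Lemma~\ref{lem:FSMappl}, with the sentences preceding Definition~\ref{def:ab} already noting that each corner matrix $B_+$, $C_-$ from \eqref{eq:ass_matrix} is a one-sided truncation of a limit operator of $A$ along a subsequence of $(l_n)$, resp.\ $(r_n)$. The paper gives no separate proof, treating this as a change of vocabulary; your argument spells out precisely that translation, including the thinning step via Bolzano--Weierstrass and the Cantor diagonal argument that the paper also invokes.
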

If stability holds for $l=(-1,-2,-3,\dots)$ and $r=(1,2,3,\dots)$ then it holds for
arbitrary cut-off sequences $(l_n)$ and $(r_n)$.
\begin{corollary}\label{cor:fullFSM}
For a bounded band matrix $A=(a_{ij})_{i,j\in\Z}$, the following are equivalent:\\[-7mm]
\begin{enumerate}[\hspace{5mm}(i)]\itemsep-1mm
\item[(i)] the FSM $(A_n)_{n=1}^\infty$ with $A_n$ from \eqref{eq:An} is applicable for arbitrary cut-offs $(l_n)$ and $(r_n)$,
\item[(ii)] the FSM $(A_n)_{n=1}^\infty$ with $A_n$ from \eqref{eq:An} is stable for arbitrary cut-offs $(l_n)$ and $(r_n)$,
\item[(iii)] $A$ and all operators $B_+$ and $C_-$ with $B\in\opsp_-(A)$ and $C\in\opsp_+(A)$ are invertible.
\end{enumerate}
\end{corollary}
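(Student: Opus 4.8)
The plan is to derive Corollary~\ref{cor:fullFSM} from Corollary~\ref{cor:FSMappl} by showing that, for this class of operators, the ``worst case'' choice of cut-off sequences $l=(-1,-2,-3,\dots)$ and $r=(1,2,3,\dots)$ already captures every limit operator that can arise from any other choice. The equivalence of (i) and (ii) is immediate from Corollary~\ref{cor:FSMappl} (applied with those particular cut-offs for one direction, and with arbitrary cut-offs for the other), so the real content is the equivalence of (ii) (for arbitrary cut-offs) with (iii). The direction ``arbitrary-cut-off stability $\Rightarrow$ (iii)'' is also essentially trivial: specializing to $l=(-1,-2,\dots)$ and $r=(1,2,\dots)$ in Corollary~\ref{cor:FSMappl} shows that $A$ and all $B_+$, $C_-$ with $B\in\opsp_-(A)$, $C\in\opsp_+(A)$ are invertible. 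So the heart of the matter is the implication (iii)~$\Rightarrow$~(ii) for arbitrary cut-offs.

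For that implication I would fix arbitrary cut-off sequences $(l_n)\to-\infty$ and $(r_n)\to+\infty$ and, via Corollary~\ref{cor:FSMappl}, reduce stability to checking invertibility of $A$ together with all $B_+$ for $B\in\opsp_l(A)$ and all $C_-$ for $C\in\opsp_r(A)$. The key observation is the inclusion $\opsp_l(A)\subseteq\opsp_-(A)$ and $\opsp_r(A)\subseteq\opsp_+(A)$: any subsequence $h$ of $l=(l_n)$ satisfies $h_n\to-\infty$, hence $h$ is a sequence in $\Z$ with $h_n\to-\infty$, and so any limit operator of $A$ along $h$ is, by definition of $\opsp_-(A)=\opsp_{(-1,-2,-3,\dots)}(A)$ together with Definition~\ref{def:ab}a), also a limit operator of $A$ along a subsequence of $(-1,-2,-3,\dots)$ --- indeed $\opsp_-(A)$ is exactly the set of \emph{all} limit operators of $A$ obtained along sequences tending to $-\infty$, which contains every limit operator obtained along a subsequence of the specific sequence $(l_n)$. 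The symmetric statement gives $\opsp_r(A)\subseteq\opsp_+(A)$. Consequently, the set of operators $\{B_+ : B\in\opsp_l(A)\}\cup\{C_- : C\in\opsp_r(A)\}$ is contained in $\{B_+ : B\in\opsp_-(A)\}\cup\{C_- : C\in\opsp_+(A)\}$, so invertibility of the latter family (assumption (iii)) forces invertibility of the former, and Corollary~\ref{cor:FSMappl}(iii)$\Rightarrow$(ii) then yields stability of $(A_n)$ for these cut-offs. Since the cut-offs were arbitrary, (ii) follows. The equivalence with (i) is then inherited from Corollary~\ref{cor:FSMappl}.

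The only step requiring care --- and the place where one should be slightly pedantic rather than technically deep --- is the claim that $\opsp_-(A)$ (limit operators along subsequences of $(-1,-2,-3,\dots)$) really equals the set of all limit operators of $A$ along \emph{any} sequence $h_n\to-\infty$. This is where one must unwind Definition~\ref{def:ab}a): given an arbitrary $h_n\to-\infty$, pass to a subsequence along which all entrywise limits $a_{i+h_n,\,j+h_n}$ exist (Bolzano--Weierstrass plus a Cantor diagonal argument, as already invoked after \eqref{eq:ass_matrix}); the resulting limit operator $A_h$ is determined entirely by the values $h_n$, and since $h$ is then a sequence of negative integers tending to $-\infty$ it can be re-indexed as (a subsequence of) a strictly decreasing sequence, hence lies in $\opsp_{(-1,-2,-3,\dots)}(A)=\opsp_-(A)$; strictly speaking one notes that $\opsp_h(A)\subseteq\opsp_-(A)$ whenever $h_n\to-\infty$. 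I do not anticipate any genuine obstacle here: the statement is a bookkeeping consequence of how $\opsp$ and $\opsp_\pm$ were defined, and the substantive analytic work has already been done in Lemma~\ref{lem:FSMappl}/Corollary~\ref{cor:FSMappl}.
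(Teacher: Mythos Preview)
Your proposal is correct and matches the paper's approach exactly. The paper does not give a separate proof of Corollary~\ref{cor:fullFSM}; it simply precedes the statement with the one-line remark ``If stability holds for $l=(-1,-2,-3,\dots)$ and $r=(1,2,3,\dots)$ then it holds for arbitrary cut-off sequences $(l_n)$ and $(r_n)$'', which is precisely the inclusion $\opsp_l(A)\subseteq\opsp_-(A)$, $\opsp_r(A)\subseteq\opsp_+(A)$ that you spell out, and then invokes Corollary~\ref{cor:FSMappl}.
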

The one-sided infinite case, $A=(a_{ij})_{i,j\in\N}$, only requires minimal changes to what
was written above: The sequence $(l_n)$ is then constant at $1$ and therefore the operators $B_+$
do not appear in $(iii)$ of Lemma \ref{lem:FSMappl} and Corollary \ref{cor:FSMappl}.

Limit operators are not only good for detecting
stability\footnote{They only come into play here because the stability of the sequence 
$(A_n)$ is equivalent to the operator $D:=\text{Diag}(A_1,A_2,\dots)$ being a Fredholm 
operator. Then Lemma \ref{lem:Fredholm} below is applied to $D$.} of the FSM.
Their primary purpose is to characterize the coset $A+\cK(X)$ of $A$ modulo the ideal
of all compact operators $\cK(X)$, where we abbreviate $\ell^p(\I)=:X$.

Recall that a bounded linear operator $A$ on $X$, we write $A\in\cL(X)$, is a {\sl Fredholm operator}
if its coset $A+\cK(X)$ is invertible in the so-called Calkin algebra $\cL(X)/\cK(X)$, which holds
iff the nullspace of $A$ has finite dimension and the range of $A$ has finite codimension in $X$.
\begin{lemma} \label{lem:Fredholm}
For a bounded band matrix $A=(a_{ij})_{i,j\in\I}$ with $\I\in\{\Z,\N\}$
and $X=\ell^p(\I)$ with any $p\in[1,\infty]$, it holds that the following are equivalent:\\[-7mm]
\begin{enumerate}[\hspace{5mm}(i)]\itemsep-1mm
\item $A$ is a Fredholm operator on $X$,
\item  all limit operators of $A$ are invertible on $\ell^p(\Z)$ \cite{RaRoSi1998,LiSei:BigQuest},
\item all limit operators of $A$ are injective on
$\ell^\infty(\Z)$ \cite{CWLi2008:FC,CWLi2008:Memoir}.
\end{enumerate}
\end{lemma}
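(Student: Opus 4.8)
I would recognise Lemma~\ref{lem:Fredholm} as (the band-operator case of) the \emph{limit operator theorem} and prove it as the cycle (i)$\Rightarrow$(ii)$\Rightarrow$(i) together with (i)$\Leftrightarrow$(iii), following Rabinovich--Roch--Silbermann \cite{RaRoSi1998} for (i)$\Leftrightarrow$(ii) on $\ell^p(\Z)$ with $1<p<\infty$, Lindner--Seidel \cite{LiSei:BigQuest} for the step that removes the a-priori uniform-boundedness hypothesis, and Chandler-Wilde--Lindner \cite{CWLi2008:FC,CWLi2008:Memoir} for the $\ell^\infty$-injectivity characterisation and for the cases $p\in\{1,\infty\}$. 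The two standing tools are: (a) the shift operators $V_k$, $(V_kx)_j:=x_{j-k}$, in terms of which a limit operator $A_h$ is the entrywise limit of the conjugates $V_{-h_n}AV_{h_n}$; and (b) the fact that the bounded band-dominated operators (the norm closure of the band operators) form a Banach algebra $\mathcal A\subseteq\cL(X)$ in which the compacts $\cK(X)$ sit as a closed two-sided ideal, and on which the partially defined assignment $A\mapsto A_h$ is additive, multiplicative, and annihilates $\cK(X)$ (since $V_{-h_n}KV_{h_n}\to0$ strongly for $K\in\cK(X)$ and $|h_n|\to\infty$); hence it descends to a map on the Calkin algebra $\cL(X)/\cK(X)$. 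For $p\in\{1,\infty\}$ one runs everything in the $\mathcal P$-framework, replacing $\cK(X)$ by the ideal of $\mathcal P$-compact operators. The half-line case $\I=\N$ I would fold into the line case in the same framework: $h_n\to+\infty$ is the only admissible sequence, the left half-line never enters the window $\{i+h_n,j+h_n\}$, and limit operators come out two-sided as in the Definition. The easy direction (i)$\Rightarrow$(ii) is then immediate: if $A$ is Fredholm, pick a regularizer $B\in\mathcal A$ with $AB-I,\ BA-I\in\cK(X)$; for any $h$ giving a limit operator, pass to a subsequence along which $B_h$ also exists and apply $\cdot\mapsto\cdot_h$ to the two identities to get $A_hB_h=B_hA_h=I$ on $\ell^p(\Z)$. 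So every $A_h$ is invertible, which also makes (iii) immediate.

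\textbf{(ii)$\Rightarrow$(i), the hard direction.} This splits into a uniform-boundedness step and a patching step. For the first I would invoke \cite{LiSei:BigQuest}: mere invertibility of each $A_h$ already forces $\sup\{\|A_h^{-1}\|:A_h\in\opsp(A)\}<\infty$, the point being that if this supremum were infinite, then near-minimising unit vectors would yield, via a Cantor diagonal argument, a limit operator of some $A_h$ -- again a limit operator of $A$ by transitivity of the operator spectrum (valid here as band matrices with bounded entries are ``rich'') -- that is not bounded below, contradicting (ii). For the second step I would use that $\opsp(A)$ is compact in the metrizable strict topology, so that, outside a large finite window, $A$ is uniformly close to one of finitely many limit operators; gluing the now uniformly bounded local inverses $A_h^{-1}$ with a slowly varying partition of unity on $\Z$ -- whose commutators with $A$ are small precisely because it varies slowly -- and correcting inside the finite window by a finite-rank (hence compact) term produces $B\in\mathcal A$ with $AB-I,\ BA-I\in\cK(X)$. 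Hence $A$ is Fredholm.

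\textbf{(iii)$\Rightarrow$(i).} Here I would first \emph{self-improve} the hypothesis to a uniform lower bound $\inf\{\|A_hx\|_\infty:A_h\in\opsp(A),\ \|x\|_\infty=1\}=:c>0$. If this infimum were $0$, then choosing $A_{h^{(k)}}$ and unit vectors $x^{(k)}$ with $\|A_{h^{(k)}}x^{(k)}\|_\infty\to0$, translating each $x^{(k)}$ so that its sup-norm is essentially attained near the origin, and passing to entrywise limits, one obtains a limit operator of a limit operator of $A$ -- again in $\opsp(A)$ -- with a nonzero $\ell^\infty$ null vector, contradicting (iii). This step genuinely needs $\ell^\infty$: in $\ell^p(\Z)$ with $p<\infty$ the translated unit vectors escape to $0$ weakly and no null vector survives. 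The uniform lower bound $c>0$, applied to $A$ itself through the same compactness of $\opsp(A)$, shows that $A$ is bounded below modulo a compact perturbation; running the symmetric argument (or passing to transposes) gives finite codimension of the range, so $A$ is Fredholm. The details are those of \cite{CWLi2008:FC,CWLi2008:Memoir}.

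\textbf{Expected main obstacle.} The homological part -- the cycle of implications and the passage to the Calkin algebra -- is formal once the framework is in place; the genuine difficulty is entirely in the converse direction, namely (1) upgrading ``each $A_h$ invertible/injective'' to a \emph{uniform} quantitative estimate over the whole (infinite) operator spectrum, and (2) assembling the local inverses into a single regularizer. These are exactly the two points that required the deepest work historically, so in a write-up I would lean on \cite{LiSei:BigQuest} and \cite{CWLi2008:FC,CWLi2008:Memoir} for them rather than reprove them.
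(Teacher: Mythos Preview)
Your outline is correct and tracks the standard literature, but note that the paper itself does \emph{not} prove Lemma~\ref{lem:Fredholm}: it is stated as a known result, with the equivalence (i)$\Leftrightarrow$(ii) attributed to \cite{RaRoSi1998,LiSei:BigQuest} and (i)$\Leftrightarrow$(iii) to \cite{CWLi2008:FC,CWLi2008:Memoir}, and is then used as a black box in the proofs of Theorems~\ref{thm:A} and~\ref{thm:A+}. So there is no ``paper's own proof'' to compare against; your write-up is effectively a reader's guide to those references, and as such it identifies the right architecture (the Calkin-algebra morphism $A\mapsto A_h$ for the easy direction, the Lindner--Seidel removal of the uniform-boundedness hypothesis plus a partition-of-unity regularizer for (ii)$\Rightarrow$(i), and the $\ell^\infty$ compactness/self-improvement argument for (iii)$\Rightarrow$(i)).

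One small sharpening for (iii)$\Rightarrow$(i): rather than arguing closed range and finite cokernel separately ``by transposes'', it is cleaner to go through (ii). Since $\opsp(A)$ is closed under taking further limit operators, hypothesis (iii) applied to each $B\in\opsp(A)$ says that all limit operators of $B$ are injective on $\ell^\infty(\Z)$; the Chandler-Wilde--Lindner result then gives that $B$ is \emph{invertible} on $\ell^\infty(\Z)$, and $p$-independence of invertibility for band matrices upgrades this to invertibility on $\ell^p(\Z)$. That is exactly (ii), and your (ii)$\Rightarrow$(i) finishes. This avoids the slightly delicate duality bookkeeping at $p\in\{1,\infty\}$ that your transpose remark glosses over.
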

%

\section{The Fibonacci word}
%
Recall the infinite sequence $v=(v_n)_{n\in\Z}$ of zeros and ones from \eqref{eq:mod1}.
In this section we interpret $v$ as an infinite word over the alphabet $\Sigma=\{0,1\}$.
Let us recall some basic notions on words. For a detailed discussion, including on
the Fibonacci word, see e.g. \cite{LothairAlg}.

\subsection{Some words on words}
An {\sl alphabet} is a nonempty set $\Sigma$.
A finite vector $w=(w_1,\dots,w_n)\in\Sigma^n$ is called a {\sl word} of {\sl length} $n$ 
over $\Sigma$. We write $|w|=n$. Sequences $(w_1,w_2,\dots)$ and $(\dots,w_{-2},w_{-1})$ are 
{\sl one-sided infinite words} over $\Sigma$ and $(\dots,w_{-2},w_{-1},w_0,w_1,w_2,\dots)$ 
is a {\sl two-sided infinite word} over $\Sigma$ when $w_i\in\Sigma$ for all $i$.
The word of length zero is denoted by $\eps$ and is called {\sl the empty word}.

Let $\Sigma^*:=\cup_{n=0}^\infty\Sigma^n$ denote the set of all finite words over $\Sigma$.
Moreover, for an infinite index set $\I\in\{\Z,\N,-\N,\Z_+,\Z_-\}$, let $\Sigma^\I$ denote the set
of all infinite words $(w_n)_{n\in\I}$ over $\Sigma$.

The word $(w_1,w_2,\dots,w_n)$ is often simply written as $w_1w_2\dots w_n$.
Similarly for infinite words. For two words $u=u_1\dots u_m$ and $v=v_1\dots v_n$, 
the word $u_1\dots u_mv_1\dots v_n$ is denoted by $u\circ v$ or just $uv$.
This operation, called concatenation, is associative on $\Sigma^*$, with $\eps$
as the neutral element of $\Sigma^*$. Concatenation is also defined between two
oppositely directed one-sided infinite words (at their finite endpoints) and between
finite and one-sided infinite words in the natural way.

A word $w$ is called a {\sl subword} (or {\sl factor}) of a word $u$ if $u$
can be written as $xwy$ with (possibly empty) words $x$ and $y$. 
We write $w\prec u$ if $w$ is a subword of $u$. $\eps\prec u$ holds for all words $u$.
The reversed word of $u=u_1\dots u_m$ and $w=w_1w_2\dots$ is
$u^R:=u_m\dots u_1$ and $w^R:=\dots w_2w_1$, respectively. 

\subsection{Finite Fibonacci words: substitution, recursion and limit} \label{sec:finiteFib}
%
Let $\Sigma=\{0,1\}$ and $\ph:\Sigma^*\to\Sigma^*$ be the homomorphism (w.r.t.~concatenation $\circ$)
with $\ph:0\mapsto 1$ and $\ph:1\mapsto 10$. Then put $f_1:=1$, $f_2:=\ph(f_1)$, $f_3:=\ph(f_2)$, etc.
In particular, we get
\[
\begin{array}{rclcl}
f_1&=&\ph(0)&=&1,\\
f_2&=&\ph(1)&=&10,\\
f_3&=&\ph(10)=\ph(1)\ph(0)&=&101,\\
f_4&=&\ph(101)=\ph(1)\ph(0)\ph(1)&=&10110,\\
f_5&=&\ph(10110)=\ph(1)\ph(0)\ph(1)\ph(1)\ph(0)&=&10110101,\\
f_6&=&\ph(10110101)=\ph(1)\ph(0)\ph(1)\ph(1)\ph(0)\ph(1)\ph(0)\ph(1)&=&1011010110110,\\
&\vdots
\end{array}
\]
This leads to the list of {\sl finite Fibonacci words} $f_1,f_2,\cdots$.
It is easy to see (by induction) that
\begin{equation}\label{eq:FiboRule}
 f_{n+1}=f_nf_{n-1}
\end{equation}
holds for $n\ge 2$, so that the length of $f_n$ is the $n$-th Fibonacci number; let us denote it by $F_n$.

The pointwise limit of this sequence $(f_n)$ is the one-sided infinite Fibonacci word 
$v_+=(v_n)_{n\in\N}$ with each $v_n$ from \eqref{eq:mod1}.
More precisely, equip $\Sigma$ with the discrete topology, $\Sigma^\N$ with the product
topology and extend each $f_n$ (by anything) to the right to a word in $\Sigma^\N$; 
then $(f_n)$ converges, by \eqref{eq:FiboRule}, and the limit is 
$v_+=1011010110110101101011011010110110\cdots$.

\subsection{The rotation formula and symmetry}
The above mechanisms define the positive half $v_+=v_1v_2\cdots$ of the two-sided infinite
Fibonacci word $v=(v_n)_{n\in\Z}$. The missing entries $\cdots,v_{-2},v_{-1},v_0$ can,
of course, be computed from the ``rotation formula'' \eqref{eq:mod1} but
they can also be expressed in terms of $v_+$:

For all $n\in\Z$, put 
\[
t_n:=n\alpha\!\!\!\mod 1\in[0,1),\quad\text{where}\quad \alpha=\tfrac{\sqrt 5-1}2
\quad\text{is the golden ratio,}
\]
so that $v_n = \chi_{[1-\alpha,1)}(t_n)$ by \eqref{eq:mod1}.
For arithmetics modulo $1$ it is of course useful to think of the interval $[0,1)$ as a circle with $0\cong1$.

Because $t_{-1}=1-\alpha$ and $t_0=0\cong1$ exactly mark the two endpoints of the interval
$[1-\alpha,1)$ and the sequences $(t_n)_{n\le-1}$ and $(t_n)_{n\ge0}$ evolve from
there, equispaced in opposite directions along our circle, one observes the symmetry $v_{-2}=v_1$, $v_{-3}=v_2,\ \dots$ in short:
\begin{equation} \label{eq:vsym}
v=v_+^R10v_+,
\end{equation}
where the $10$ in the middle refers to $v_{-1}v_0$.

Note that, by the irrationality of $\alpha$, all $t_n$ are pairwise distinct.
So the asymmetry that is caused by the different brackets of the interval $[1-\alpha,1)$
only shows for $n=-1$ and $n=0$, where $t_n$ exactly hits the two interval endpoints.
For $n\in\Z\setminus\{-1,0\}$, one has $v_n=v_{-1-n}$.

\subsection{Subwords of length $n$}
Another intrigiuing feature of the Fibonacci word is its small number of subwords.

Let $\Sigma=\{0,1\}$.
A random word $u\in\Sigma^\Z$ would, almost surely, contain every
one of the $2^n$ words $w\in\Sigma^n$ as a subword, for every $n\in\N$.
For the Fibonacci word $v\in\Sigma^\Z$, the situation is very different:
\begin{equation} \label{eq:subwords}
\begin{array}{|c|l|c|}
\hline
\text{length}&\text{subwords of $v$ of that length}&\text{count}\\\hline
1&0,\ 1&2\\
2&01,\ 10,\ 11&3\\
3&010,\ 011,\ 101,\ 110&4\\
4&0101,\ 0110,\ 1010,\ 1011,\ 1101&5\\
\vdots&&\vdots\\
n&\cdots&n+1\\\hline
\end{array}
\end{equation}
The number, say $sub_v(n)$, of subwords of $v$ of any length $n\in\N$ is exactly $n+1$.

For general words $u\in\Sigma^\Z$, it is easy to see that $sub_u$ grows monotonically, and
if $sub_u(n)=sub_u(n+1)$ for some $n$ then $sub_u(m)$ will remain at that value,
say $p$, for all $m\ge n$. The latter says that $v$ is $p$-periodic (up to a finite perturbation). 

So for an aperiodic word $u$, the function $sub_u$ grows strictly monotonically (by at least $1$
for each $n$), starting from $sub_u(1)=|\Sigma|=2$. So the subword count function with 
minimal growth (among the unbounded functions) is given by $sub_u(n)=n+1$. 
This is exactly what is observed for the Fibonacci word $u=v$.

\section{Finite sections of the Fibonacci Hamiltonian}
Let $v=(v_n)_{n\in\Z}$ be the Fibonacci sequence \eqref{eq:mod1} and let
\begin{equation} \label{eq:FiboH}
 A := S_{-1}+M_{v}+S_1\ :\ \ell^p(\Z)\to\ell^p(\Z)
\end{equation}
be the Fibonacci Hamiltonian \eqref{eq:discrSchr}, where
\[
 S_k:\ell^p(\Z)\to\ell^p(\Z),\qquad (S_kx)_{n+k}=x_n,\qquad n\in\Z
\]
denotes the shift by $k\in\Z$ components and
\[
 M_b:\ell^p(\Z)\to\ell^p(\Z),\qquad (M_bx)_n=b_nx_n,\qquad n\in\N
\]
denotes the operator of pointwise multiplication by $b=(b_n)_{n\in\Z}\in\ell^\infty(\Z)$.

We identify $A$ with its two-sided infinite matrix $(a_{ij})_{i,j\in\Z}$ with
$a_{nn}=v_n$ and $a_{n,n\pm1}=1$ for all $n\in\Z$ and zeros everywhere else.
Corollary \ref{cor:fullFSM} connects the FSM of $A$ with the
limit operators of $A$. So we need to get a hand on these limit operators.

\subsection{Limit operators of the Fibonacci Hamiltonian}
Let $h=(h_1,h_2,\dots)$ be a sequence in $\Z$ with $h_k\to\pm\infty$, so that the limit operator $A_h$
of the Fibonacci Hamiltonian $A$ from \eqref{eq:FiboH} exists. Then
\[
A_h\ =\ (S_{-1})_h+(M_v)_h+(S_1)_h\ =:\ S_{-1}+M_{v_h}+S_1
\]
with a new potential 
\[
v_h\ :=\ \lim_{k\to\infty} S_{-h_k}v,
\]
where the limit is taken w.r.t.~pointwise convergence on $\Sigma^\Z$ for $\Sigma=\{0,1\}$.

%
The set $\cF$ of all such potentials $v_h$ is translation invariant (translations of limit operators of $A$ 
are limit operators of translations of $A$) and closed under pointwise convergence.
$\cF$ is the so-called {\sl Fibonacci subshift}. By our definition of $\cF$,
\[
\opsp(A)=\{S_{-1}+M_{v_h}+S_1\ :\ v_h\in\cF\}.
\]
The set $\cF$ is explicitly known (see e.g. Theorem 2.14 in \cite{Damanik2007}
and the appendix of \cite{DamanikLenz2004}):

\begin{equation} \label{eq:limops_expl}
\cF=\{v^\theta,w^\theta:\theta\in[0,1)\},
\end{equation}
where
\[
v_n^{\theta}\ :=\ \chi_{[1-\alpha,1)}(\theta + n\alpha\!\!\!\!\mod 1),\qquad
w_n^{\theta}\ :=\ \chi_{(1-\alpha,1]}(\theta + n\alpha\!\!\!\!\mod 1),\qquad n\in\Z.
\]
In particular, 
\begin{equation} \label{eq:selfsim}
A\in\opsp(A),
\end{equation}
since $v=v^0\in\cF$.
In fact, we do not need this explicit description \eqref{eq:limops_expl} of $\cF$.
The following lemma is sufficient (and much more handy) for us.
It expresses the well-known minimality of the Fibonacci subshift.
\begin{lemma}\label{lem:samewords}
Every $v_h\in\cF$ has the same list \eqref{eq:subwords} of subwords as $v$.
So for every $w\in\Sigma^*$ and every $v_h\in\cF$ it holds that
\[
w\prec v\quad\iff\quad w\prec v_h.
\]
\end{lemma}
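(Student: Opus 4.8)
The plan is to reduce the statement to two facts about the Fibonacci subshift: first, that every $v_h\in\cF$ has only subwords that already occur in $v$, and second, conversely, that every subword of $v$ occurs in every $v_h$. Both directions follow from the way elements of $\cF$ are constructed as pointwise limits of shifts of $v$, together with the counting information in \eqref{eq:subwords}.

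For the forward implication $w\prec v_h\Rightarrow w\prec v$, I would argue as follows. Fix $v_h=\lim_k S_{-h_k}v$ and suppose $w$ is a subword of $v_h$, say $w=(v_h)_{m+1}\cdots(v_h)_{m+|w|}$. Since convergence in $\Sigma^\Z$ is pointwise (hence uniform on the finite index window $\{m+1,\dots,m+|w|\}$), for all large $k$ the block $(S_{-h_k}v)_{m+1}\cdots(S_{-h_k}v)_{m+|w|}$ equals $w$; but that block is exactly $v_{h_k+m+1}\cdots v_{h_k+m+|w|}$, a subword of $v$. Hence $w\prec v$. This shows $\{w:w\prec v_h\}\subseteq\{w:w\prec v\}$, so in particular $sub_{v_h}(n)\le sub_v(n)=n+1$ for every $n$.

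For the reverse implication I would use minimality, phrased combinatorially via the subword count. The key point is that $sub_{v_h}$ is itself a subword-count function of an infinite word, so it is monotone and, being bounded above by $n+1$, either it is eventually constant (so $v_h$ is eventually periodic) or it also satisfies $sub_{v_h}(n)=n+1$ for all $n$. To rule out the periodic alternative, one shows that no $v_h$ is eventually periodic: this follows because $v$ is linearly recurrent / aperiodic and aperiodicity is inherited by the subshift — concretely, if $v_h$ were, say, right-periodic with period $p$, then by the forward implication every length-$2p$ block of $v_h$ is a block of $v$, and one produces arbitrarily long periodic stretches in $v$, contradicting that $sub_v$ is strictly increasing (equivalently, contradicting aperiodicity of $v$, which is immediate from irrationality of $\alpha$ in \eqref{eq:mod1}). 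Hence $sub_{v_h}(n)=n+1$ for all $n$. Combined with the inclusion of subword sets from the forward direction and the fact that both sets have exactly $n+1$ elements of each length $n$, the two subword sets coincide, which gives $w\prec v\iff w\prec v_h$, and in particular the list \eqref{eq:subwords} is shared.

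The main obstacle is the aperiodicity (no-periodic-limit) step: one must be careful that a limit operator could a priori pick up a periodic pattern even though $v$ itself is not periodic. The clean way around this is exactly the counting argument above — transfer the subword inclusion to the count, observe that an eventually periodic word has bounded subword count, and note that the bound $sub_{v_h}(n)\le n+1$ together with strict monotonicity (forced once we know $v_h$ is aperiodic) pins the count down; aperiodicity of $v_h$ in turn is forced because any periodic block in $v_h$ would, by the forward inclusion, appear in $v$, and the irrationality of $\alpha$ forbids $v$ from containing the arbitrarily long exact repetitions that unbounded periodicity would require. Alternatively, if one prefers, the explicit description \eqref{eq:limops_expl} of $\cF$ makes aperiodicity of each $v^\theta$ and $w^\theta$ transparent directly from the irrationality of $\alpha$, but the counting route keeps the proof self-contained.
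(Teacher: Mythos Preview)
Your forward implication $w\prec v_h\Rightarrow w\prec v$ is fine and is exactly what the paper does: pointwise convergence on the finite window carrying $w$ pulls $w$ back into some shift of $v$.

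The reverse implication, however, has a real gap. Your plan is to show $sub_{v_h}(n)\le n+1$ (which follows from the forward inclusion) and then force equality by ruling out eventual periodicity of $v_h$. The step where you rule out periodicity does not go through. You argue that if $v_h$ were eventually $p$-periodic then $v$ would contain arbitrarily long $p$-periodic blocks, ``contradicting that $sub_v$ is strictly increasing (equivalently, contradicting aperiodicity of $v$).'' Neither conclusion follows: an aperiodic word with strictly increasing complexity can perfectly well contain arbitrarily long periodic stretches (think of $01\,001\,0001\,00001\cdots$, which is aperiodic with unbounded runs of $0$'s). So the presence of long periodic factors in $v$ contradicts neither aperiodicity of $v$ nor strict growth of $sub_v$. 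What you actually need here is a power-freeness / bounded-index statement for the Fibonacci word (no factor $u^k$ for $k$ large), and that is a separate nontrivial fact you have not supplied. Your fallback via the explicit description \eqref{eq:limops_expl} does rescue the argument (each $v^\theta$, $w^\theta$ is a coding of an irrational rotation, hence not eventually periodic), but then the proof is no longer self-contained in the way you advertise, and the paper explicitly wants to avoid \eqref{eq:limops_expl}.

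The paper's reverse direction avoids all of this by proving uniform recurrence directly from the substitution/recursion structure: given $w\prec v$, pick $n$ with $w\prec f_n\prec f_{n+1}$; since $f_{n+1}=f_nf_{n-1}$ and $f_{n+2}=f_{n+1}f_n$, every $f_m$ with $m\ge n$ (hence $v_+$) is a concatenation of copies of $f_n$ and $f_{n+1}$, so $w$ reappears in $v$ with gap at most $F_{n+1}$. Consequently every translate $S_{-h_k}v$ contains $w$ within an $F_{n+1}$-window around the origin, and this survives the limit to $v_h$. This bounded-gap argument is short, uses only \eqref{eq:FiboRule}, and sidesteps the aperiodicity-of-the-limit issue entirely.
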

\begin{proof}
Take arbitrary $w\in\Sigma^*$ and $v_h\in\cF$. So there is a sequence
$h=(h_1,h_2,\dots)$ in $\Z$ with $h_k\to\pm\infty$
and $v_h=\lim_{k\to\infty} S_{-h_k}v$, pointwise.

\fbox{$\Leftarrow$} If $w\prec v_h$ then $w\prec S_{-h_k}v$ for large $k$
(strict topology on $\Sigma$), so that $w\prec v$.

\fbox{$\Rightarrow$} Now let $w\prec v$. 
W.l.o.g.~assume $w\prec v_+$. Choose $n\in\N$ so that $w$ appears 
in the first $F_n$ letters of $v_+$, i.e. $w\prec f_n\prec f_{n+1}$
(recall the notations from \S\ref{sec:finiteFib}).

By \eqref{eq:FiboRule}, we have $f_{n+2}=f_{n+1}f_n$ and
$f_{n+3}=f_{n+2}f_{n+1}=f_{n+1}f_nf_{n+1}$. By induction, every $f_m$ with $m\ge n$, 
and hence $v_+$, is composed of $f_n$
and $f_{n+1}$. Since $w$ appears as a subword in $f_n$ and $f_{n+1}$, it appears
infinitely often in $v_+$, where two appearances of $w$ are at most $|f_{n+1}|=F_{n+1}$
letters away from each other. So every translate $S_{-h_k}v$ of $v$ contains $w$
in an $F_{n+1}$-neighbourhood of zero. Hence, every limit $v_h\in\cF$ contains $w$ 
(in an $F_{n+1}$-neighbourhood of zero).
\end{proof}

\subsection{Main results}
Now we are ready to state and prove our two main results.
\begin{theorem} \label{thm:A}
The FSM of the two-sided infinite Fibonacci Hamiltonian \eqref{eq:FiboH} is stable for any choice of cut-off points
and in every space $\ell^p(\Z)$ with $p\in[1,\infty]$.
\end{theorem}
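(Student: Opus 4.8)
The plan is to apply Corollary \ref{cor:fullFSM}: stability of the FSM for arbitrary cut-offs is equivalent to the invertibility of $A$ together with the invertibility of all the half-line operators $B_+$ (for $B\in\opsp_-(A)$) and $C_-$ (for $C\in\opsp_+(A)$). By the self-similarity \eqref{eq:selfsim}, every limit operator of $A$ is of the form $S_{-1}+M_{v_h}+S_1$ with $v_h\in\cF$, and $A$ itself is among these. So it suffices to prove two things: first, that $A$ and indeed every limit operator $S_{-1}+M_{v_h}+S_1$ is invertible on $\ell^p(\Z)$ for every $p\in[1,\infty]$; second, that for every $v_h\in\cF$, the half-line compressions $(S_{-1}+M_{v_h}+S_1)_+$ and $(S_{-1}+M_{v_h}+S_1)_-$ are invertible on $\ell^p(\Z_+)$ resp.\ $\ell^p(\Z_-)$.

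For the invertibility of the two-sided operators I would invoke Lemma \ref{lem:Fredholm} together with a Coburn-type / zero-dimensional argument: a band operator on $\ell^p(\Z)$ is invertible iff it is Fredholm of index zero, and Fredholmness is equivalent to injectivity on $\ell^\infty(\Z)$ of all its limit operators. The crucial input is Lemma \ref{lem:samewords}: the set of limit operators of $A=S_{-1}+M_v+S_1$ is exactly $\{S_{-1}+M_{v_h}+S_1: v_h\in\cF\}$, this set is translation invariant, and limit operators of such operators again lie in this set. So the family is ``self-similar'', and one reduces the invertibility of all of them to showing that none of the operators $S_{-1}+M_{v_h}+S_1$ with $v_h\in\cF$ has a nontrivial bounded solution of $(S_{-1}+M_{v_h}+S_1)x=0$, i.e.\ no bounded sequence $x$ satisfying the three-term recursion $x_{n-1}+(v_h)_n x_n+x_{n+1}=0$. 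This is a statement about the Fibonacci transfer matrix cocycle at energy $0$; one shows $0$ lies in the spectrum-free region or, more robustly, that the transfer matrices over the repeated blocks force any bounded solution to vanish — here one uses that $\cF$ has only the subwords listed in \eqref{eq:subwords}, so only finitely many transfer-matrix words can occur, and a uniform-growth/norm estimate on products of these finitely many $2\times2$ matrices rules out bounded nonzero solutions. Index zero then follows from $A\in\opsp(A)$ and a homotopy or from the tridiagonal structure (the operator is a compact perturbation away from a Laurent operator with an explicit winding number, or one uses that all limit operators coincide up to translation so the index must be $0$).

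For the half-line operators $B_+$ and $C_-$ I would use Lemma \ref{lem:Fredholm} again, now on $\ell^p(\Z_+)$: the half-line operator $B_+$ is Fredholm iff $B$ itself (its ``limit operator at $+\infty$'', which is a two-sided limit operator of $A$, hence already handled) is invertible, which we have. So $B_+$ is Fredholm; it remains to compute its index. Because of the Coburn property for these one-sided Jacobi-type operators (a consequence of injectivity of the adjoint dynamics / the fact that a nonzero solution of the three-term recursion cannot be simultaneously in $\ell^p(\Z_+)$ and its reverse in $\ell^{p'}(\Z_-)$), a Fredholm $B_+$ has $\dim\ker B_+\cdot\dim\operatorname{coker}B_+=0$, so invertibility is equivalent to index zero. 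The index of $B_+$ is again computed via a homotopy within the class (all the relevant $v_h$ share the same subword combinatorics, and one can connect any two potentials in $\cF$ through a path of band operators keeping Fredholmness, e.g.\ deforming the diagonal entries to a constant and using the known index of the free half-line Jacobi operator), giving index $0$.

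The main obstacle I expect is the kernel/injectivity estimate: proving that no operator $S_{-1}+M_{v_h}+S_1$ with $v_h\in\cF$ has a nonzero bounded solution of the three-term recursion at energy $0$, uniformly over $\cF$. This is exactly the place where the special structure of the Fibonacci word must be exploited — either via the known fact that $0$ is a gap edge / not in the spectrum after all (one should double-check whether $0\in\spec A$; the spectrum is a zero-measure Cantor set and it is conceivable that $0$ is in a gap, which would make the whole argument much cleaner), or, if $0\in\spec A$, via a hyperbolicity-type argument: since $A$ has purely singular continuous spectrum and no eigenvalues on $\ell^2(\Z)$, and more strongly no bounded solutions (by minimality plus the Gordon-type or cocycle arguments underlying the absence of eigenvalues), one transfers this to every $v_h\in\cF$ using that $\cF$ is a minimal subshift so all these operators have the same spectral and transfer-matrix behaviour. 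Making this uniform and valid on all $\ell^p$, $p\in[1,\infty]$ (not just $p=2$), is the delicate point, and Lemma \ref{lem:Fredholm}(iii) — injectivity on $\ell^\infty(\Z)$ — is precisely the tool that lets one work with bounded solutions and sidestep the $p$-dependence.
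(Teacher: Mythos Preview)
Your overall framework is correct: Corollary~\ref{cor:fullFSM} reduces everything to invertibility of $A$ and of the half-line compressions $B_+$, $C_-$, and Lemma~\ref{lem:Fredholm}(iii) is the right tool to convert this into an $\ell^\infty$-injectivity question for the transfer-matrix recursion. But two things separate your proposal from an actual proof.

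First, the heart of the argument --- that no $S_{-1}+M_{v_h}+S_1$ with $v_h\in\cF$ (resp.\ its half-line compression) admits a nontrivial bounded solution of the recursion $x_{n-1}+(v_h)_n x_n+x_{n+1}=0$ --- is only gestured at. You say ``a uniform-growth/norm estimate on products of these finitely many $2\times2$ matrices'' will do, but this is exactly the content of Lemma~\ref{lem:inj}, and it requires real work: the paper uses Lemma~\ref{lem:samewords} to decompose every $v_h\in\cF$ (from a suitable position) into blocks $101$ and $01$ (resp.\ $101$ and $10$ to the left), computes the block transfer matrices $T_{101}$ and $T_{01}$ explicitly, and exhibits a sign/modulus condition (``property~{\bf C}'') that both matrices preserve and that forces strictly monotone growth of one component. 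One then checks by hand that every nontrivial initial condition eventually enters this cone. None of this is automatic from ``finitely many subwords occur''; your speculation about whether $0$ lies in a spectral gap, or about importing absence of $\ell^2$-eigenvalues, does not settle the $\ell^\infty$-injectivity question and is a detour.

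Second, your route from ``Fredholm'' to ``invertible'' via index/homotopy/Coburn is both unnecessary and, as written, not sound. The Fibonacci Hamiltonian is \emph{not} a compact perturbation of a Laurent operator (the potential is aperiodic at both ends), and the limit operators do \emph{not} all coincide up to translation, so neither of your index heuristics applies; moreover the proposed homotopy ``deform the diagonal to a constant'' need not stay Fredholm along the way. The paper avoids all of this by exploiting self-adjointness: once $B_+$ (or $B_-$, or $A$) is shown to be Fredholm and injective on $\ell^2$, self-adjointness gives injectivity of the adjoint, hence surjectivity, hence invertibility on $\ell^2$; $p$-independence for band matrices then transfers this to all $\ell^p$. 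That single observation replaces your entire index discussion.
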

The compression $A_\N$ of $A$ from \eqref{eq:FiboH} to $\ell^p(\N)$ is called {\sl one-sided infinite Fibonacci Hamiltonian}.
Its matrix $(a_{ij})_{i,j\in\N}$ is the submatrix of $A$ consisting of all rows and columns with $i,j\in\N$.

\begin{theorem} \label{thm:A+}
The FSM of the one-sided infinite Fibonacci Hamiltonian $A_\N$ is stable for any choice of cut-off points
and in every space $\ell^p(\N)$ with $p\in[1,\infty]$.
\end{theorem}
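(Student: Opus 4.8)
The plan is to deduce Theorem \ref{thm:A+} from the already-available machinery rather than redo the limit-operator analysis from scratch. By the one-sided version of Corollary \ref{cor:FSMappl} (the remark in the excerpt that, for $\I=\N$, the left cut-off is frozen at $1$ and the operators $B_+$ disappear), the FSM of $A_\N$ is stable for arbitrary cut-offs if and only if $A_\N$ itself is invertible on $\ell^p(\N)$ and all operators $C_-$ coming from $C\in\opsp_+(A_\N)$ are invertible. So there are really three things to check: (1) invertibility of $A_\N$; (2) for every $C\in\opsp(A_\N)$, invertibility of the lower-right half $C_-$; and we want to read all of this off from Theorem \ref{thm:A}.

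First I would identify the limit operators of $A_\N$. Since the deleted part of $A$ (the rows/columns with index $\le 0$) is finitely far from any window that runs off to $+\infty$, every limit operator of $A_\N$ taken along a sequence $h_k\to+\infty$ coincides with a limit operator of $A$ along the same sequence; concretely, $\opsp(A_\N)=\opsp_+(A)$, a subset of $\opsp(A)$. Now Theorem \ref{thm:A}, via Corollary \ref{cor:fullFSM}(iii), already tells us that $A$ is invertible and that \emph{all} operators $B_+$ and $C_-$ with $B\in\opsp_-(A)$ and $C\in\opsp_+(A)$ are invertible on every $\ell^p$. In particular, every $C_-$ with $C\in\opsp_+(A)=\opsp(A_\N)$ is invertible — that is exactly requirement (2). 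This is the point where the proof of Theorem \ref{thm:A} is doing the real work, so nothing new is needed for the $C_-$ part.

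It remains to handle (1), the invertibility of $A_\N$ on $\ell^p(\N)$. Here I would use Lemma \ref{lem:Fredholm}: $A_\N$ is Fredholm on $\ell^p(\N)$ iff all its limit operators are invertible on $\ell^p(\Z)$, and its limit operators are precisely the members of $\opsp_+(A)\subset\opsp(A)$, all of which are invertible because Theorem \ref{thm:A} forces every element of $\opsp(A)$ to be invertible (each such limit operator $B$ is, among other things, its own "full" truncation, and invertibility of $A$ together with of all $B_+$, $C_-$ is what Corollary \ref{cor:fullFSM} extracts from stability; more directly, stability of the FSM of $A$ plus Corollary \ref{cor:FSMappl} applied to $B$ gives invertibility of each $B\in\opsp(A)$, using $A\in\opsp(A)$ from \eqref{eq:selfsim} and Lemma \ref{lem:samewords} so that the subshift structure is shared). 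Hence $A_\N$ is Fredholm. To upgrade Fredholmness to invertibility one needs the index and the kernel to vanish; I would argue that $A_\N$ is injective on $\ell^\infty(\N)$ — any bounded formal solution of $A_\N x=0$ extends (by prescribing $x_0$ suitably, or by noting the three-term recurrence pins down the whole sequence from two consecutive values) to a bounded solution of $B x=0$ for an appropriate limit operator $B\in\opsp(A)$, contradicting the injectivity of $B$ on $\ell^\infty$ guaranteed by Lemma \ref{lem:Fredholm}(iii) and Theorem \ref{thm:A}. The Coburn-type argument (a tridiagonal, hence in a suitable sense "transpose-symmetric", Fredholm operator of index zero that is injective is invertible) then finishes (1).

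The main obstacle I anticipate is precisely this last step: showing that Fredholmness plus the subshift structure yields genuine invertibility of $A_\N$ (ruling out a nontrivial kernel or cokernel), rather than merely Fredholmness. One clean route is to invoke the fact that $A_\N$ has Fredholm index zero — because $A_\N$ is a compact perturbation of a half-line Jacobi operator whose index can be computed to be zero, or because $A_\N^\top$ is again a half-line Fibonacci-type Hamiltonian of the same kind — and then kill the kernel via the $\ell^\infty$-injectivity transfer to limit operators described above. Everything else (the equality $\opsp(A_\N)=\opsp_+(A)$, the reduction through Corollary \ref{cor:FSMappl} in the one-sided form, the invertibility of the $C_-$'s) is essentially bookkeeping on top of Theorem \ref{thm:A}, so the write-up should be short once the invertibility-versus-Fredholmness point is pinned down.
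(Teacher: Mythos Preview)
Your reduction for the $C_-$ part is exactly what the paper does: $\opsp_+(A_\N)=\opsp_+(A)$, so the invertibility of every $C_-$ with $C\in\opsp_+(A_\N)$ is already contained in statement {\bf 2)} from the proof of Theorem~\ref{thm:A}. No issue there.

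Where you diverge from the paper is the invertibility of $A_\N$, and here your route is both longer and incomplete. The paper's argument is a one-liner: $A_\N$ \emph{is} one of the operators $B_+$ already handled. Concretely, $A_\N=B_+$ for $B=S_{-1}AS_1\in\opsp_-(A)$ (using \eqref{eq:selfsim}), so invertibility of $A_\N$ is literally a special case of statement {\bf 3)} established in the proof of Theorem~\ref{thm:A}. You overlooked this identification and instead tried to assemble invertibility from Fredholmness, index zero, and injectivity.

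That detour has a genuine gap. Your injectivity argument claims that a bounded $x\in\ell^\infty(\N)$ with $A_\N x=0$ ``extends \dots\ to a bounded solution of $Bx=0$ for an appropriate limit operator $B\in\opsp(A)$''. But the three-term recurrence determines the extension to the left uniquely (you must take $x_0=0$ to make row~$1$ of $Ax$ vanish, and then $x_{-1},x_{-2},\dots$ are forced); there is no freedom to choose $B$, and nothing guarantees this leftward extension stays bounded. So you cannot contradict injectivity of some two-sided $B\in\opsp(A)$ this way. What actually works---and what the paper uses implicitly via statement~{\bf 3)}---is the direct $\ell^\infty$-injectivity of $B_+$ from Lemma~\ref{lem:inj}\,{\bf c)}, which is proved by a transfer-matrix growth argument, not by extending to a two-sided solution. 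Once you recognise $A_\N$ as a $B_+$, both the injectivity and the full invertibility come for free, and the Coburn/index discussion becomes unnecessary.
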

The rest of this paper is devoted to the proof of these two theorems.
The main ingredient, besides Corollary \ref{cor:fullFSM} and Lemma \ref{lem:Fredholm}, is the following lemma.
\begin{lemma} \label{lem:inj}
For the Fibonacci Hamiltonian $A$ from \eqref{eq:FiboH}, the following statements hold:\\[-2em]
\begin{enumerate}[\hspace{5mm}\bf a)]\itemsep-1mm
\item All $B\in\opsp(A)$ are injective on $\ell^\infty(\Z)$.
\item For all $B\in\opsp_+(A)$, the compression $B_-$ is injective on $\ell^\infty(\Z_-)$.
\item For all $B\in\opsp_-(A)$, the compression $B_+$ is injective on $\ell^\infty(\Z_+)$.
\end{enumerate}
\end{lemma}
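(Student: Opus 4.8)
The plan is to prove all three statements simultaneously by exploiting the transfer-matrix structure of the difference equation $(Bx)_n = x_{n-1}+u_nx_n+x_{n+1}=0$ together with the combinatorial control on subwords given by Lemma \ref{lem:samewords}. First recall that for a potential $u=(u_n)$ taking values in $\{0,1\}$, a formal solution of $(Bx)_n=0$ is governed by the transfer matrices
\[
T(u_n)=\begin{pmatrix}-u_n & -1\\ 1 & 0\end{pmatrix},\qquad\text{so that}\qquad \begin{pmatrix}x_{n+1}\\ x_n\end{pmatrix}=T(u_n)\begin{pmatrix}x_n\\ x_{n-1}\end{pmatrix}.
\]
Thus a bounded solution $x\in\ell^\infty$ corresponds to a vector $(x_0,x_{-1})^\top$ whose forward and backward orbit under products of the matrices $T(0)=\left(\begin{smallmatrix}0&-1\\1&0\end{smallmatrix}\right)$ and $T(1)=\left(\begin{smallmatrix}-1&-1\\1&0\end{smallmatrix}\right)$ stays bounded. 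Both matrices have determinant $1$, so injectivity of $B$ on $\ell^\infty(\Z)$ fails exactly when there is a nonzero vector with bounded two-sided orbit. For the half-line statements b) and c), the relevant solutions of $B_-x=0$ on $\ell^\infty(\Z_-)$ are those with $x_1=0$ (the Dirichlet condition coming from truncation) and bounded backward orbit; similarly for $B_+$ with $x_{-1}=0$.

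The key step is a norm-growth estimate: I would show that for any subword $w$ of $v$ (equivalently, of any $u\in\cF$, by Lemma \ref{lem:samewords}) the corresponding product of transfer matrices $T(w):=T(w_{|w|})\cdots T(w_1)$ expands every nonzero vector in an appropriate sense once $|w|$ is large enough — more precisely, that there exist a length $L$ and a constant $c>1$ such that for every word $w\prec v$ with $|w|=L$ and every nonzero $\xi\in\R^2$, at least one of $\|T(w)\xi\|$ or $\|T(w)^{-1}\xi\|$ exceeds $c\|\xi\|$, or (in the sharper Furstenberg-type form) that products over the first $F_n$ letters grow like $\lambda^n$ for some $\lambda>1$. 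Because $\cF$ has only $n+1$ subwords of length $n$ (the substitution structure of \eqref{eq:FiboRule}), this is a finite check at each scale that propagates via the renormalization $f_{n+1}=f_nf_{n-1}$: one tracks the traces of the transfer matrices over $f_n$, which satisfy the Fibonacci trace recursion, and shows the orbit of traces escapes to infinity, precluding a bounded solution. Any nonzero $x\in\ell^\infty(\Z)$ would force $\|(x_{n+1},x_n)^\top\|$ to grow geometrically along a suitable sequence of scales, a contradiction; this proves a). For b) and c), the Dirichlet vector $(0,1)^\top$ (resp. $(1,0)^\top$) is a specific nonzero initial condition, and the same one-sided growth estimate shows its one-sided orbit is unbounded, so the only $\ell^\infty$ solution of $B_-x=0$ with $x_1=0$ is trivial, and likewise for $B_+$.

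The main obstacle I anticipate is making the growth statement uniform over the whole subshift $\cF$ rather than just for the periodic approximants $f_n$: a priori a limit operator could have a potential that is a ``singular'' element of the hull (for instance one of the $w^\theta$ in \eqref{eq:limops_expl} corresponding to an orbit hitting an endpoint), and one must rule out that such a potential admits a bounded solution even though every finite window looks like a subword of $v$. I would handle this by working entirely at the level of finite subwords: since boundedness of a solution is detected by the boundedness of cocycle products over all finite windows, and Lemma \ref{lem:samewords} guarantees every window of every $u\in\cF$ is a subword of $v$, a uniform-in-$w$ growth estimate over subwords of $v$ automatically transfers to all of $\cF$ without ever needing the explicit parametrization. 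The remaining technical point — establishing that the transfer-matrix products over $f_n$ genuinely grow, i.e. that the Fibonacci Hamiltonian has no bounded generalized eigenfunctions at energy $0$ — follows from the fact that $0$ lies in the spectrum-free regime for the half-line problem (equivalently, from the positivity of the Lyapunov exponent / the known Cantor spectrum structure), and I would cite the relevant subword/trace-map estimate rather than reprove it. (Here I am using that $0$ is a fixed, benign energy: the difference operator in \eqref{eq:FiboH} has no spectral parameter subtracted, so we only ever need the single energy $E=0$.)
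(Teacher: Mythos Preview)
Your plan has the right skeleton (transfer matrices, reduction to subwords via Lemma~\ref{lem:samewords}) but the crucial step is deferred rather than proved, and the deferral hides two real problems.

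First, circularity: you propose to obtain the growth estimate from ``the fact that $0$ lies in the spectrum-free regime'' or from ``the known Cantor spectrum structure''. But $0\notin\spec A$ is exactly what the paper derives \emph{from} this lemma (via Lemma~\ref{lem:Fredholm}); you cannot cite it here. And knowing that $\spec A$ is a Cantor set says nothing about whether the specific point $0$ belongs to it for this specific $\{0,1\}$-valued potential. Likewise, positivity of the Lyapunov exponent by itself does not exclude a bounded solution at a fixed energy: it gives growth for Lebesgue-a.e.\ initial vector (and, via Oseledets, a one-dimensional contracting direction), so a bounded solution could in principle survive along the stable subspace. One needs an additional argument (Gordon-type, or an explicit cone/monotonicity computation) to rule that out.

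Second, the half-line cases are not handled by your scheme. For {\bf c)} you must show that the \emph{specific} Dirichlet vector $(x_0,x_{-1})^\top=(1,0)^\top$ has an unbounded forward orbit, and for {\bf b)} the mirror statement. Even with uniform hyperbolicity of the cocycle, why is this particular vector not in the stable direction for some $B\in\opsp_-(A)$? Your sketch jumps over this. The paper deals with it by an elementary, fully explicit computation: it observes (Lemma~\ref{lem:101,01}) that the diagonal decomposes into blocks $101$ and $01$ (with a possible prefix $1$), writes down the two block transfer matrices $T_{101}$ and $T_{01}$, and checks by hand that both preserve a cone condition (property {\bf C}: $y_1y_2<0$, $|y_1|<|y_2|$) while strictly increasing $|y_2|$ by a non-decreasing amount. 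A short case check then shows the Dirichlet orbit enters this cone after finitely many steps, forcing unbounded growth. Part {\bf b)} follows from {\bf c)} by the reflection symmetry \eqref{eq:vsym}, and {\bf a)} is a two-sided variant with a finite case analysis on two free initial values. No spectral theory, no Lyapunov exponents, no trace map --- just $2\times2$ matrix arithmetic and the subword combinatorics. If you want to salvage your approach, you would need to replace the cited black box by precisely such a concrete expansion argument; at that point you are essentially reproducing the paper's proof.
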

Here we use the notations $B_\pm$ and $\Z_\pm$ from Definition \ref{def:ab} b).
We now show how this lemma implies Theorems \ref{thm:A} and \ref{thm:A+}
before we come to its proof (in Section \ref{sec:proofinj}).

\begin{proofof}{Theorem \ref{thm:A}}
Let $p\in[1,\infty]$. By Corollary \ref{cor:fullFSM}, we have to show that\\[-2em]
\begin{enumerate}[\hspace{5mm}\bf 1)]\itemsep-1mm
\item $A$ is invertible on $\ell^p(\Z)$,
\item for all $B\in\opsp_+(A)$, the compression $B_-$ is invertible on $\ell^p(\Z_-)$, and
\item for all $B\in\opsp_-(A)$, the compression $B_+$ is invertible on $\ell^p(\Z_+)$.
\end{enumerate}
It is sufficient to study the case $p=2$ as $A$ and all $B_+$ and $B_-$ are band matrices,
and so their invertibility is independent of $p\in[1,\infty]$ (see e.g. \cite[\S5.2.7]{Kurbatov}).

Property {\bf a)} of Lemma \ref{lem:inj} implies the invertibility of all $B\in\opsp(A)$,
by Lemma \ref{lem:Fredholm}. Since $A\in\opsp(A)$, by \eqref{eq:selfsim}, also $B=A$ is invertible.
So {\bf 1)} is shown.

To show {\bf 2)}, take an arbitrary $B\in\opsp_+(A)$ and look at $B_-$ as an operator on $\ell^2(\Z_-)$.
Since $B_-$ is injective on $\ell^\infty(\Z_-)$, by Lemma \ref{lem:inj} {\bf b)}, it is also
injective on the subset $\ell^2(\Z_-)$ of $\ell^\infty(\Z_-)$. 
Its adjoint is also injective on $\ell^2(\Z_-)$ since
$B_-$ is self-adjoint (by $A=A^*$). So it remains to show that the range of $B_-$ is closed:
From {\bf 1)} it follows that $A$ is Fredholm. By Lemma \ref{lem:Fredholm}, $B$ is invertible,
hence Fredholm. By Lemma \ref{lem:Fredholm} again, all operators in
$\opsp(B)\supset\opsp(B_-)$
are invertible, whence also $B_-$ is Fredholm (by Lemma \ref{lem:Fredholm} again)
and hence has a closed range.

{\bf 3)} follows from Lemma \ref{lem:inj} {\bf c)} and Lemma \ref{lem:Fredholm} in the very same way.
\end{proofof}

\begin{proofof}{Theorem \ref{thm:A+}}
This time we have to show that\\[-2em]
\begin{enumerate}\itemsep-1mm
\item[\hspace{5mm}\bf 4)] $A_\N$ is invertible on $\ell^p(\Z)$,
\item[\hspace{5mm}\bf 5)] for all $B\in\opsp_+(A_\N)$, the compression $B_-$ is invertible on $\ell^p(\Z_-)$.
\end{enumerate}
Statement {\bf 4)} follows from {\bf 3)} because $A_\N=B_+$ for $B=S_{-1}AS_1\in\opsp_-(A)$.\\
Statement {\bf 5)} follows from {\bf 2)} because $\opsp_+(A_\N)=\opsp_+(A)$, by the construction of $A_\N$.
\end{proofof}

\bigskip
Let us point out that the presence of (iii) in Lemma \ref{lem:Fredholm} is vital here.
With only (i) and (ii) at hand, we would be stuck in a vicious circle. The study of the invertibility of $A$ 
can be reduced to the following, presumably easier, problems: injectivity of $A$,
injectivity of $A^*$, Fredholmness of $A$. The latter again splits into many, presumably
easier, problems: invertibility of all limit operators $B$ of $A$, by Lemma \ref{lem:Fredholm} (ii). 
But now $A$ is one of those operators $B$, by \eqref{eq:selfsim}, which brings us back to the original problem!
So it is good to have -- and use -- Lemma \ref{lem:Fredholm} (iii) instead of (ii) here.

\medskip
Now all that remains to be done is the proof of Lemma \ref{lem:inj}.

\subsection{Proof of Lemma \ref{lem:inj}} \label{sec:proofinj}
First notice that one can restrict consideration to real sequences in both the one- and two-sided infinite case. 
Since $B$ (and the compressions $B_+$ and $B_-$) correspond to real matrices, it holds
\[
Bx=0\ \iff\ 0=\Re(Bx)= B(\Re(x)) \ \text{ and }\ 0=\Im(Bx)=B(\Im(x))
\]
with $\Re(\cdot)$ and $\Im(\cdot)$ denoting the real and imaginary part of a sequence. 
So the injectivity of $B$ on the space of real bounded sequences implies the injectivity on the 
space $\ell^\infty(\I)$ of complex bounded sequences. 
One is left with proving $Bx=0 \Rightarrow x=0$ for all bounded real sequences.
The idea is most transparent in the one-sided infinite case.
So let us start with the proof of {\bf c)}.

To show that an operator $B_+$ is injective on $\ell^\infty(\Z_+)$, derive
the entries $x_1, x_2,\dots$ of a solution $x=(x_n)_{n\in\Z_+}$ of the 
homogeneous equation $B_+x=0$, starting from a nonzero initial entry $x_0$, 
and prove that some entry $x_n$ will eventually
exceed (in modulus) any previously given bound $r>0$. Because, for every $r>0$,
this computation will only take finitely many steps $x_1,\dots,x_n$, 
it is enough to know about finite subwords of the potential of $B_+$. 
(Our proof does not use the explicit formula \eqref{eq:limops_expl}.)

Identify $B_+$ with its matrix $(b_{ij})_{i,j\in\Z_{+}}$. Because of the tridiagonal structure,
the value of $x_0$ is sufficient to calculate the whole solution vector $x$. More precisely, 
$x_1=-b_{00}\,x_0$ and $x_{n+1}=-b_{nn}\,x_n-x_{n-1}$ for $n \in \N$. 
As usual, rewrite this recurrence with transfer matrices:
\[
T_{b_{nn}}
\begin{pmatrix}
x_{n-1} \\ x_n
\end{pmatrix}
=
\begin{pmatrix}
x_n \\ x_{n+1}
\end{pmatrix},
\quad\text{where}\quad
T_{b_{nn}}=\begin{pmatrix}
0 & 1 \\
-1& -b_{nn}
\end{pmatrix}
\quad\text{with}\quad
b_{nn}\in\Sigma=\{0,1\}.
\]
W.l.o.g.~we can assume $x_0=1$.
Here is an example computation for a certain diagonal $(b_{nn})$: 
\\[-0.7em]
\definecolor{hellgrau}{rgb}{0.85,0.85,0.85}
\newcommand{\hgr}[1]{\colorbox{hellgrau}{$#1$}}
\[
\tabcolsep-1mm
\begin{array}{|r|cccccccccccccccccccc}
\hline
n&0&1&2&3&4&5&6&7&8&9&10&11&12&13&14&15&16&\cdots\\
\hline
b_{nn}&1&0&\multicolumn{1}{c|}{1}&1&0&\multicolumn{1}{c|}{1}&0&\multicolumn{1}{c|}{1}&1&0&\multicolumn{1}{c|}{1}&1&0&\multicolumn{1}{c|}{1}&0&\multicolumn{1}{c|}{1}&1&\cdots\\
\hline
x_n&\hgr1&-1&-1&\hgr2&-1&-2&\hgr3&2&\hgr{-5}&3&5&\hgr{-8}&3&8&\hgr{-11}&-8&\hgr{19}&\cdots\\
\hline
\end{array}
\]
In this example, with a bit of optimism, we seem to observe that\\[-2em]
\begin{itemize}\itemsep-1mm
\item the diagonal $(b_{nn})$ of $B$ is composed of blocks ``101" and ``01", and
\item the entries $\hgr{x_n}$ at the beginning of each block grow unboundedly in modulus.
\end{itemize}
We will prove that this is always the case.
The following lemma is a special case of a partition of general Sturmian words
as introduced in \cite{LenzPhD} (also see \cite{DamanikLenz1999,Lenz03}):
\begin{lemma}\label{lem:101,01}
The diagonal $b:=(b_{nn})_{n\in\Z_+}$ of $B_+$ with $B\in\opsp_-(A)$ is of the form
\[
b=p\,w_1\,w_2\,w_3\cdots
\qquad\text{with}\qquad
p\in\{\eps,1\}
\quad\text{and}\quad
w_i\in\{101,01\}
\quad\text{for all}\quad
i\in\N.
\]
\end{lemma}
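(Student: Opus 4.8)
The plan is to analyze the structure of the diagonal $b=(b_{nn})_{n\in\Z_+}$ of $B_+$, where $B\in\opsp_-(A)$, purely in terms of which finite subwords can occur. By Lemma~\ref{lem:samewords}, the potential $v_h$ underlying any limit operator $B$ of $A$ has exactly the same subwords as the Fibonacci word $v$; in particular $b$, being the positive half of such a $v_h$, contains no subword outside the list \eqref{eq:subwords}. The decisive observations are: (1) the word $00$ never occurs in $v$ (it is absent already at length $2$ in \eqref{eq:subwords}), so between any two consecutive $0$'s there is at least one $1$; (2) the word $111$ never occurs in $v$ (at length $3$ the only subwords containing two $1$'s are $011$, $101$, $110$, never $111$), so one never sees three $1$'s in a row. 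Hence every occurrence of $0$ in $b$ is preceded by either $1$ or $11$ and followed by either $1$ or $11$.

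\textbf{Key steps.} First I would record the two forbidden-word facts above and deduce a \emph{local return structure}: scanning $b$ from the left, after an optional initial single $1$ (the prefix $p$), every maximal block of consecutive $1$'s has length $1$ or $2$, and these $1$-blocks are separated by single $0$'s. Next I would argue that a greedy left-to-right parsing of $b$ into the pieces $01$ and $101$ is forced. Concretely, start at the first position after the prefix $p$, which must be a $0$ (if $b$ begins with $1$, absorb that $1$ into $p$; then $b$ cannot begin $11$ because $0$ cannot be skipped forever and $111$ is forbidden — one has to be careful here and I will make this precise below). Reading the next letters: a $0$ is always followed by $1$ (no $00$), and that $1$ is followed by either $0$ — giving the block $01$ — or by $1$ — and then by $0$ (no $111$), giving the block $101$; in the latter case the following letter is again a $0$ (no $00$), so the parsing resumes in the same state. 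Thus the parsing never gets stuck and consumes $b$ entirely in pieces $01$ and $101$, which is the claimed form $b=p\,w_1w_2w_3\cdots$ with $p\in\{\eps,1\}$ and $w_i\in\{101,01\}$.

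\textbf{Making the prefix precise.} The one delicate point is the very beginning of $b$. The letters $b_{00},b_{11},\dots$ form a one-sided infinite word; by Lemma~\ref{lem:samewords} all its finite subwords lie in \eqref{eq:subwords}, but a priori it could start with $1$, $11$, $01$, etc. If it starts with $0$, take $p=\eps$ and parse as above. If it starts with $1$, the next letter is $0$ or $1$; if $10\dots$ then strip $p=1$ and the remainder starts with $0$; if $11\dots$ then (by no $111$) the third letter is $0$, so $b=110\dots$, and I would take $p=1$ leaving a remainder $10\dots$ — but that remainder must then itself be handled, so instead I set $p=\eps$ and observe $110 = 1\cdot 10$ is \emph{not} of the allowed form. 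The cleaner route: note $b$ cannot begin with $11$ at all, because $11$ is always preceded in $v$ by $0$ (the subword $111$ is forbidden, so a maximal $1$-block has length $\le 2$, and a length-$2$ block $11$ sits between $0$'s), so if $b_{00}b_{11}=11$ then in the \emph{full} two-sided word $v_h$ there is a $0$ at position $-1$; but that does not constrain $b_{00}$ since $B_+$ starts at index $0$. Hence I will argue directly: the possible length-$2$ prefixes of $b$ are exactly those $w\in\Sigma^2$ with $w\prec v$, namely $01,10,11$; a prefix $01\dots$ parses with $p=\eps$; a prefix $10\dots$ or $11\dots$ starts with $1$, and since $1$ followed by the rest must continue as a legal Fibonacci half-word, I strip the single leading $1$ as $p$ and show the remaining word begins with $0$ (it cannot begin with $1$, else $b$ began with $11$ and then, looking one letter further, with $110$ — allowed — or $111$ — forbidden; so $b=110\dots$, and then $p=1$ leaves $10\dots$, which begins with $1$ again). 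This threatens an infinite regress, which is resolved by the fact that $00$ is forbidden: the leading $1$-block of $b$ has length $1$ or $2$, and after it comes a $0$; if that block has length $2$, i.e.\ $b=11w'$ with $w'$ starting $0$, this is simply \emph{not} expressible as $p\,w_1w_2\cdots$ — so the real content is to rule out $b$ starting with $11$.

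\textbf{The main obstacle.} I expect the genuine difficulty to be exactly this prefix analysis: showing $b$ cannot begin with the word $11$. This is where one must use more than the subword list — one needs that $b$ is the half-line restriction of a \emph{right-infinite Fibonacci word}, i.e.\ an element of the Fibonacci subshift, not merely a word avoiding $00$ and $111$. The resolution I would pursue: since $11$ occurs in $v$ only within the blocks of the return-word decomposition (the cited Sturmian partition from \cite{LenzPhD}), and since $B\in\opsp_-(A)$ is obtained as a pointwise limit $S_{-h_k}v$ with $h_k\to-\infty$, one can track where the origin lands relative to the block structure of $v$; a leading $11$ in $b$ would force the origin to land in the interior of an $11$-block, and I would show (using that $11$ is always immediately followed by $0$ and the next block again starts with $1$ or $10$, matching $101$ or $01$) that in fact the limit word, read from that position, still decomposes as $p\,w_1w_2\cdots$ with $p$ absorbing at most one $1$. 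In short: the combinatorics of $00$- and $111$-avoidance gives the decomposition of the \emph{tail} essentially for free; the only real work is pinning down the finite prefix $p\in\{\eps,1\}$, and for that I would invoke the structure of the Fibonacci subshift (or the explicit description \eqref{eq:limops_expl}, if a slicker argument via $v^\theta,w^\theta$ turns out to be shorter) to exclude the stray case $b=11\cdots$.
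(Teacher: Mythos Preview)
Your core idea is exactly the paper's: use Lemma~\ref{lem:samewords} together with \eqref{eq:subwords} to see that $b$ contains neither $00$ nor $111$, and then parse. The paper's entire proof is two sentences: from no $00$ and no $111$ one has that every $0$ is followed by $1$, and every $1$ is followed by either $01$ or $101$ (look one step ahead: $1\!\to\!0$ gives $01$, while $1\!\to\!1\!\to\!0\!\to\!1$ gives $101$). Since both blocks end in $1$, the parsing repeats. If $b_0=0$ take $p=\eps$ and $w_1=01$; if $b_0=1$ take $p=1$.

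Your ``main obstacle'' is a phantom. You assert that if $b$ begins with $11$ then $b$ ``is simply not expressible as $p\,w_1w_2\cdots$'', and you propose to rule out this prefix using extra subshift structure or even \eqref{eq:limops_expl}. But $b=110\cdots$ \emph{is} expressible: take $p=1$, and the remainder $b_1b_2b_3\cdots=10\cdots$ has $b_2=0$ and (no $00$) $b_3=1$, so $w_1=101$; thus $b=1\cdot 101\cdots$. The source of the confusion is your insistence that ``the first position after the prefix $p$ must be a $0$'': that is false, since $w_1$ is allowed to be $101$. Once you drop that assumption, the ``infinite regress'' disappears and nothing beyond the two forbidden subwords is needed. (Incidentally, your inductive step also slips: starting from a $0$ and reading $0,1,1,0$ yields the block $011$, not $101$; the clean invariant is ``just after a $1$'', as above.)
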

\begin{proof}
By Lemma \ref{lem:samewords} and \eqref{eq:subwords}, $b$ contains neither 00 nor 111 as a subword.
So $0$ is always followed by $1$, and $1$ is always followed by $101$ or $01$.
\end{proof}

So we are particularly interested in the patterns ``101" and ``01" and their corresponding transfer matrices
\[
T_{101}:=T_1T_0T_1=\begin{pmatrix}0 & -1 \\ 1 & 2\end{pmatrix}
\qquad\text{and}\qquad
T_{01}:=T_1T_0=\begin{pmatrix}-1 & 0 \\ 1 & -1\end{pmatrix}.
\]


Let us say that a vector $\vekz{y_1}{y_2}\in\R^2$ has property {\bf C} if
$y_1\cdot y_2<0$ and $|y_1|<|y_2|$.

\begin{lemma} \label{lem:preserveC}
Both $T_{101}$ and $T_{01}$ preserve property {\bf C}.
More precisely, if $\vekz{y_1}{y_2}\in\R^2$ has property {\bf C} then
$\vekz{z_1}{z_2}:=T_w\,\vekz{y_1}{y_2}$ with $w\in\{101,01\}$ has properties
\begin{enumerate}[\hspace{20mm}\bf A)]\itemsep-1mm
\item $|z_2|>|y_2|$ with $|z_2|-|y_2|\ge\min\{|y_1+y_2|, |y_1|\}>0$,
\item $|z_1+z_2|\ge|y_1+y_2|$ and $|z_1|\ge|y_1|$,\quad and
\item $z_1\cdot z_2<0$ and $|z_1|<|z_2|$.
\end{enumerate}
\end{lemma}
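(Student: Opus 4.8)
The plan is to prove Lemma \ref{lem:preserveC} by a direct computation, treating the two transfer matrices $T_{101}$ and $T_{01}$ separately but in parallel, since each has a simple $2\times 2$ form. Write $\vekz{y_1}{y_2}$ with property \textbf{C}, i.e.\ $y_1y_2<0$ and $|y_1|<|y_2|$; by the sign-flip symmetry $\vekz{y_1}{y_2}\mapsto\vekz{-y_1}{-y_2}$ (which all three asserted properties respect, as they only involve products and absolute values) I may assume $y_1\le 0\le y_2$ and in fact $y_1<0<y_2$ with $y_2>|y_1|=-y_1$, so $y_1+y_2>0$. It will be convenient to record the three derived quantities $y_1+y_2>0$, $|y_1|=-y_1>0$, and $y_2$, and to note $y_2>y_1+y_2>0$ and $y_2>-y_1>0$.

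The computational core is to apply each matrix and read off $z_1,z_2$. For $w=01$: $T_{01}\vekz{y_1}{y_2}=\vekz{-y_1}{y_1-y_2}$, so $z_1=-y_1=|y_1|>0$ and $z_2=y_1-y_2=-(y_2-y_1)$; note $y_2-y_1=y_2+|y_1|>0$, so $z_2<0$, giving $z_1z_2<0$. Then $|z_2|=y_2-y_1=y_2+|y_1|$, so \textbf{A)} reads $|z_2|-|y_2|=|y_1|\ge\min\{|y_1+y_2|,|y_1|\}$, which holds (with equality). For \textbf{B)}: $z_1+z_2=-y_1+y_1-y_2=-y_2$, so $|z_1+z_2|=y_2>y_1+y_2=|y_1+y_2|$, and $|z_1|=|y_1|$, both as required. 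For \textbf{C)}: $|z_1|=|y_1|<|y_2|<|z_2|$ by \textbf{A)}. For $w=101$: $T_{101}\vekz{y_1}{y_2}=\vekz{-y_2}{y_1+2y_2}$, so $z_1=-y_2<0$ and $z_2=y_1+2y_2=(y_1+y_2)+y_2>0$, giving $z_1z_2<0$; $|z_2|=y_1+2y_2$ and $|z_2|-|y_2|=y_1+y_2=|y_1+y_2|\ge\min\{|y_1+y_2|,|y_1|\}$, proving \textbf{A)}. For \textbf{B)}: $z_1+z_2=-y_2+y_1+2y_2=y_1+y_2$, so $|z_1+z_2|=|y_1+y_2|$, and $|z_1|=y_2>|y_1|$, both as required. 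For \textbf{C)}: $|z_1|=y_2<y_1+2y_2=|z_2|$ since $y_1+y_2>0$.

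Since \textbf{C)} is exactly property \textbf{C} for $\vekz{z_1}{z_2}$, this also shows both matrices preserve property \textbf{C}, so the parenthetical first sentence of the lemma follows from the detailed claims. The only mild subtlety — and the one place to be careful — is the normalization step: I must check at the outset that negating the input vector negates the output vector (true, as $T_w$ is linear) and that properties \textbf{A)}, \textbf{B)}, \textbf{C)} are invariant under $\vekz{y_1}{y_2}\mapsto\vekz{-y_1}{-y_2}$ together with $\vekz{z_1}{z_2}\mapsto\vekz{-z_1}{-z_2}$, which is immediate since every quantity appearing in them is either a product of two coordinates or an absolute value of a coordinate or of a coordinate sum. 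With that reduction in place, everything is the short arithmetic above; there is no real obstacle, just bookkeeping of signs, and the strict inequality $y_1+y_2>0$ (equivalently $|y_1|<|y_2|$) is what makes the $\min$ in \textbf{A)} strictly positive and powers the growth $|z_2|>|y_2|$.
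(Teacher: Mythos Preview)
Your proof is correct and follows exactly the approach of the paper: compute $T_w\vekz{y_1}{y_2}$ explicitly for $w\in\{101,01\}$ and read off the three properties. The paper's own proof merely records the formulas $T_{101}\vekz{y_1}{y_2}=\vekz{-y_2}{y_2+(y_1+y_2)}$ and $T_{01}\vekz{y_1}{y_2}=\vekz{-y_1}{y_1-y_2}$ and declares the rest a ``straightforward computation'', whereas you have actually carried out the sign bookkeeping (with the harmless normalization $y_1<0<y_2$), so your version is simply a more detailed rendering of the same argument.
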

\begin{proof}
This is a straightforward computation using
\[T_{101}
\begin{pmatrix}
y_1\\y_2
\end{pmatrix}
=
\begin{pmatrix}
-y_2 \\y_2 + (y_1 + y_2)
\end{pmatrix}
\qquad\text{and}\qquad
 T_{01} 
\begin{pmatrix}
y_1 \\ y_2
\end{pmatrix}
= 
\begin{pmatrix}
-y_1 \\y_1 - y_2
\end{pmatrix}.\]
\end{proof}

Property {\bf A} shows a growth (in modulus) of the second vector component 
after applying $T_{101}$ or $T_{01}$.
By property {\bf B}, 
the amount of growth is 
non-decreasing when applying $T_{101}$ or $T_{01}$ again.
The fact that property {\bf C} is preserved keeps the argument working for the next application
of $T_{101}$ or $T_{01}$, leading to unbounded growth of the second vector component, by induction.

So all that we need is one first ocurrence of property {\bf C} for a vector $\vekz{x_n}{x_{n+1}}$
in our computation of a sequence $x=(x_0,x_1,x_2,\dots)$ that solves $B_+x=0$.


We start with the case $p=\varepsilon$. Besides $x_0=1$ (see above), we put $x_{-1}:=0$ to
start our recurrence and account for the non-existence of column number $-1$ in the matrix $B_+$.
Depending on which of $T_{101}$ and $T_{01}$ we apply to $\vekz{x_{-1}}{x_0}=\vekz01$, we get
\[
 T_{01}\begin{pmatrix} 0 \\ 1 \end{pmatrix} = \begin{pmatrix} -1&	0\\ 1&	-1  \end{pmatrix}  \begin{pmatrix} 0 \\ 1 \end{pmatrix} = \begin{pmatrix} 0 \\ -1 \end{pmatrix}
 \qquad\text{or}\qquad
 T_{101}\begin{pmatrix} 0 \\ 1 \end{pmatrix} = \begin{pmatrix} 0&	-1\\ 1&	2  \end{pmatrix}  \begin{pmatrix} 0 \\ 1 \end{pmatrix} = \begin{pmatrix} -1 \\ 2 \end{pmatrix}.
\] 
So repeated application of $T_{01}$ leads to $\pm\vekz01$ but after the first application of $T_{101}$,
which will eventually happen since $b$ is not periodic, one gets $\vekz{x_n}{x_{n+1}}=\pm\vekz{-1}2$
for some $n\in\N$. This vector has property {\bf C}. From our arguments above it follows that the sequence $x$ is unbounded.

If the prefix $p$ of $b$ is $1$, it follows from $x_0=1$ that $x_1=-1$, so that our
recurrence starts with $\vekz{x_0}{x_1}=\vekz1{-1}$.
The application of the transfer matrices yields 
\[
 T_{01}\begin{pmatrix} 1 \\ -1 \end{pmatrix} = \begin{pmatrix} -1&	0\\ 1&	-1  \end{pmatrix}  \begin{pmatrix} 1 \\ -1 \end{pmatrix} = \begin{pmatrix} -1 \\ 2 \end{pmatrix}
 \qquad\text{or}\qquad
 T_{101}\begin{pmatrix} 1 \\ -1 \end{pmatrix} = \begin{pmatrix} 0&	-1\\ 1&	2  \end{pmatrix}  \begin{pmatrix} 1 \\ -1 \end{pmatrix} = \begin{pmatrix} 1 \\ -1 \end{pmatrix}.
\]  
So repeated application of $T_{101}$ leads to $\vekz1{-1}$ but after the first application of $T_{01}$,
which will eventually happen since $b$ is not periodic, one gets $\vekz{x_n}{x_{n+1}}=\vekz{-1}2$
for some $n\in\N$. This vector has property {\bf C}. From our arguments above it follows that the sequence $x$ is unbounded.


For both possibilities of the prefix $b\in\{\eps,1\}$ and all possibilities of the following blocks $w_i\in\{101,01\}$
(recall Lemma \ref{lem:101,01}), we have shown that all nontrivial solutions $x$ of the homogenous system 
$B_+x=0$ are unbounded, so that every $B_+$ with $B\in\opsp_-(A)$ is injective on $\ell^{\infty}(\Z_+)$.

%
%
%
%
%

\medskip

To see {\bf b)}, consider the three flip operators
\[
J_\leftarrow:\ell^\infty(\Z_+)\to\ell^\infty(\Z_-),\qquad
J_\rightarrow:\ell^\infty(\Z_-)\to\ell^\infty(\Z_+),\qquad\text{and}\qquad
J_\leftrightarrow:\ell^\infty(\Z)\to\ell^\infty(\Z),
\]
all three acting by the rule $x\mapsto y$ with $y_n=x_{-n}$ for $n$ in $\Z_-$, $\Z_+$ and $\Z$, respectively.

The formula $v=v_+^R10v_+$ from \eqref{eq:vsym} implies that
$\opsp_+(A)$ exactly consists of the reflections $C=J_\leftrightarrow BJ_\leftrightarrow$
of operators $B\in\opsp_-(A)$, so that
\[
 \{C_-\ :\ C\in\opsp_+(A)\}\ =\ \{(\underbrace{J_\leftrightarrow BJ_\leftrightarrow}_C)_+=J_\leftarrow B_+J_\rightarrow\ :\ B\in\opsp_-(A)\}.
\]
So, clearly, {\bf b)} follows from {\bf c)}.

\medskip

Finally, for the proof of {\bf a)}, we use a two-sided version of
the proof of {\bf c)}.

Let $B\in\opsp(A)$ and again let $b\in\{0,1\}^\Z$ be the diagonal of $B$. Let $x$ be a nontrivial solution
of the homogeneous equation $Bx=0$.
%
%
%
%
%
%
%
%
We will prove that the sequence $x$ grows unboundedly in at least one direction, left or right.
The growth to the right is studied as in the proof of {\bf c)} above -- growth to the left
by symmetric arguments. Here is the analogue of Lemma \ref{lem:101,01}.

\begin{lemma}\label{lem:101,01,10}
The diagonal $b:=(b_{nn})_{n\in\Z}$ of $B\in\opsp(A)$ is of the form
\begin{equation} \label{eq:101,01,10}
b\ =\ \cdots\ w_{-4}\ w_{-3}\ \underbrace{101}_{w_{-2}} \underbrace{101}_{w_{-1}} \underbrace{01}_{w_0} \underbrace{101}_{w_1}\underbrace{101}_{w_2}\ w_{3}\ w_{4}\ \cdots
\end{equation}
with $w_{-i}\in\{101,10\}$ and $w_i\in\{101,01\}$ for all $i\in\N$.
\end{lemma}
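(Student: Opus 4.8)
The plan is to run the block-peeling argument behind Lemma~\ref{lem:101,01} simultaneously in both directions, starting from a carefully chosen centre. By Lemma~\ref{lem:samewords} and the list \eqref{eq:subwords}, the diagonal $b$ contains neither $00$ nor $111$, and $010$ is one of its subwords. I would fix an occurrence of it, say $b_{m-2}b_{m-1}b_m=010$, and cut $b$ between its last two letters into a left-infinite word $b^-=(\dots,b_{m-2},b_{m-1})$ and a right-infinite word $b^+=(b_m,b_{m+1},\dots)$. Since $00$ and $111$ are palindromes and every subword of $b^+$ or $b^-$ is a subword of $b$, both $b^+$ and the reversal $(b^-)^R=(b_{m-1},b_{m-2},\dots)$ again contain no $00$ and no $111$.

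To the right I would then reuse the reasoning of Lemma~\ref{lem:101,01} (which needs only that the word avoids $00$ and $111$): $b^+$ begins with the letter $b_m=0$, so there is no leftover prefix, and one peels off blocks from $\{101,01\}$ -- each block forced by the absence of $00$ and $111$ -- the first one being $01$. This gives $b^+=w_0w_1w_2\cdots$ with $w_0=01$ and $w_i\in\{101,01\}$ for all $i\in\N$. Applying the same argument to $(b^-)^R$, which begins with the letters $b_{m-1}b_{m-2}=10$ (again no prefix, and the first block forced to be $101$), gives $(b^-)^R=y_1y_2\cdots$ with $y_1=101$ and $y_i\in\{101,01\}$ for all $i\in\N$; reading this backwards, $b^-=\cdots w_{-2}w_{-1}$ with $w_{-i}:=y_i^R$, so $w_{-1}=101$ and $w_{-i}\in\{101,10\}$ for all $i\in\N$. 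Concatenating $b^-$ and $b^+$ gives the decomposition \eqref{eq:101,01,10}; note that the construction even pins down $w_0=01$ and $w_{-1}=101$, matching the displayed picture.

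The only step that needs any care is the choice of the cut point: at an arbitrary position one of the two halves could be left with a stray single letter $1$ -- precisely the prefix $p\in\{\eps,1\}$ of Lemma~\ref{lem:101,01} -- which is not a block of the required shape. Cutting inside a $010$ is exactly what removes this prefix on both sides at once (to the right the half then starts with $0$, to the left its reversal starts with $10$). Apart from that, the proof is the same elementary forcing -- ``$0$ is followed by $1$'' and ``after $11$ comes $0$'' -- already used for Lemma~\ref{lem:101,01}, now applied symmetrically in the two directions, so I do not expect any real obstacle.
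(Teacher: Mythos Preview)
Your approach is the paper's own---anchor at a subword that $b$ must contain (via Lemma~\ref{lem:samewords}) and then peel off blocks in both directions using only the exclusion of $00$ and $111$. The gap is that your anchor $010$ is too short. The display \eqref{eq:101,01,10} is part of the claim: it fixes the five central blocks as $w_{-2}=w_{-1}=w_1=w_2=101$ and $w_0=01$, and these specific values are actually used afterwards in the proof of Lemma~\ref{lem:inj}\,{\bf a)} when the entries $x_{-4},\dots,x_5$ are computed (for instance, case~3 there needs $w_1=101$ to get $\binom{x_4}{x_5}=\binom{\alpha}{-2\alpha}$). From a mere occurrence of $010$ you can only force $w_0=01$ and $w_{-1}=101$, exactly as you observe; nothing prevents $w_1=01$ or $w_{-2}=10$.

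The fix is immediate and keeps your argument intact: anchor instead at the $14$-letter word $101\,101\,01\,101\,101$. This word is a subword of $v$ (it is $v_{-6}\cdots v_7$), hence, by Lemma~\ref{lem:samewords}, a subword of every $b$ with $B\in\opsp(A)$. Placing the cut between its $8$th and $9$th letters gives $b^-$ ending in $\dots 101\,101\,01$ and $b^+$ starting with $101\,101\dots$; wait, rather cut between the $01$ and the following $101$, so that $b^+$ begins with $01\,101\,101$ and $(b^-)^R$ begins with $101\,101$. Your peeling argument then yields $w_0=01$, $w_1=w_2=101$, $w_{-1}=w_{-2}=101$, and the remaining $w_{\pm i}$ in $\{101,01\}$ resp.\ $\{101,10\}$ exactly as before. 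This is precisely the paper's proof.
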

\begin{proof}
The word $101\,101\,01\,101\,101$ is contained in the Fibonacci word $v$ (as $v_{-6}\cdots v_7$)
and therefore, by Lemma \ref{lem:samewords}, also in $b$.
By Lemma \ref{lem:samewords} and \eqref{eq:subwords}, $b$ contains neither 00 nor 111 as a subword.
So, as argued in Lemma \ref{lem:101,01}, $0$ is always followed by $1$, and $1$ is always followed by $101$ or $01$. 
Moreover, $0$ is always preceded by $1$, and $1$ is always preceded by $101$ or $10$.
\end{proof}

So besides $T_{101}$ and $T_{01}$, we now also look at the transfer matrix $T_{10}:=T_0T_1$.
Note that, when we study the asymptotics of $x$ towards $-\infty$ (going backward in ``time''),
we will have to look at inverses of the transfer matrices.

Therefore, let us say that a vector $\vekz{y_1}{y_2}\in\R^2$ has property {\bf F} if
$y_1\cdot y_2<0$ and $|y_1|>|y_2|$. Here is the ``leftward'' analogue of Lemma \ref{lem:preserveC}.

\begin{lemma}
Both $T_{101}^{-1}$ and $T_{10}^{-1}$ preserve property {\bf F}.
More precisely, if $\vekz{y_1}{y_2}\in\R^2$ has property {\bf F} then
$\vekz{z_1}{z_2}:=T_w^{-1}\,\vekz{y_1}{y_2}$ with $w\in\{101,10\}$ has properties
\begin{enumerate}[\hspace{20mm}]\itemsep-1mm
\item[\bf D)] $|z_1|>|y_1|$ with $|z_1|-|y_1|\ge\min\{|y_1+y_2|, |y_2|\}>0$,
\item[\bf E)] $|z_1+z_2|\ge|y_1+y_2|$ and $|z_2|\ge|y_2|$,\quad and
\item[\bf F)] $z_1\cdot z_2<0$ and $|z_1|>|z_2|$.
\end{enumerate}
\end{lemma}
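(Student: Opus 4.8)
The plan is to establish this "leftward" lemma by reducing it to the already-proven Lemma \ref{lem:preserveC} via an explicit symmetry. The key observation is that the flip $J:\vekz{y_1}{y_2}\mapsto\vekz{y_2}{y_1}$ interchanges property \textbf{C} (where $|y_1|<|y_2|$) with property \textbf{F} (where $|y_1|>|y_2|$), and it also intertwines the backward transfer matrices with the forward ones: one checks by direct matrix computation that $J\,T_{101}^{-1}\,J=T_{101}$ and $J\,T_{10}^{-1}\,J=T_{01}$. Indeed, $T_{10}=T_0T_1$ is (up to relabeling) the reversal of $T_{01}=T_1T_0$, and conjugating by the antidiagonal flip $J$ implements exactly that reversal-plus-inversion. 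Once these two matrix identities are in hand, the whole lemma follows formally: if $\vekz{y_1}{y_2}$ has property \textbf{F}, then $J\vekz{y_1}{y_2}$ has property \textbf{C}, so Lemma \ref{lem:preserveC} applies to $T_w\big(J\vekz{y_1}{y_2}\big)$ for $w\in\{101,01\}$, and translating back through $\vekz{z_1}{z_2}=T_w^{-1}\vekz{y_1}{y_2}=J\,T_{w'}\,J\vekz{y_1}{y_2}$ (with $w'$ the forward partner of $w$) turns properties \textbf{A}, \textbf{B}, \textbf{C} for the flipped vectors into properties \textbf{D}, \textbf{E}, \textbf{F} for the original ones, because $J$ simply swaps the roles of the two coordinates and $|z_1|,|z_2|$ are read off from $|(\cdot)_2|,|(\cdot)_1|$ of the \textbf{C}-side output.

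Alternatively, and perhaps more in keeping with the paper's style, I would just do the computation directly, mirroring the proof of Lemma \ref{lem:preserveC}. One needs the explicit inverses
\[
T_{101}^{-1}=\begin{pmatrix}2 & 1\\ -1 & 0\end{pmatrix}
\qquad\text{and}\qquad
T_{10}^{-1}=(T_0T_1)^{-1}=\begin{pmatrix}-1 & 0\\ -1 & -1\end{pmatrix},
\]
from which
\[
T_{101}^{-1}\begin{pmatrix}y_1\\y_2\end{pmatrix}=\begin{pmatrix}y_1+(y_1+y_2)\\-y_1\end{pmatrix}
\qquad\text{and}\qquad
T_{10}^{-1}\begin{pmatrix}y_1\\y_2\end{pmatrix}=\begin{pmatrix}-y_1\\-y_1-y_2\end{pmatrix}.
\]
These are exactly the formulas from the proof of Lemma \ref{lem:preserveC} with the two coordinates swapped, so the three assertions \textbf{D}, \textbf{E}, \textbf{F} drop out of the same case analysis on signs and magnitudes (using $y_1y_2<0$ and $|y_1|>|y_2|$, hence $|y_1+y_2|=|y_1|-|y_2|$ for the second formula, and so on).

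I do not expect a genuine obstacle here — the lemma is, by design, the mirror image of Lemma \ref{lem:preserveC}. The only place to be careful is bookkeeping: making sure that under the flip the admissible left-blocks $\{101,10\}$ correspond precisely to the admissible right-blocks $\{101,01\}$ (which is consistent with Lemma \ref{lem:101,01,10}, where the backward blocks are $w_{-i}\in\{101,10\}$), and that the "$\min$" term in \textbf{D} is $\min\{|y_1+y_2|,|y_2|\}$ rather than $\min\{|y_1+y_2|,|y_1|\}$ — the roles of $y_1$ and $y_2$ in the bound are swapped relative to \textbf{A}, exactly as the statement already records. Writing it via the conjugation identities $J\,T_{101}^{-1}\,J=T_{101}$ and $J\,T_{10}^{-1}\,J=T_{01}$ makes this bookkeeping automatic and avoids repeating the sign analysis, so that is the route I would take, relegating the one-line verification of those two identities to "a straightforward computation" as the paper does elsewhere.
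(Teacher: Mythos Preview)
Your symmetry argument via the flip $J$ is correct and is a genuinely different route from the paper's proof. The paper simply records the two formulas
\[
T_{101}^{-1}\begin{pmatrix}y_1\\y_2\end{pmatrix}=\begin{pmatrix}y_1+y_2+y_1\\-y_1\end{pmatrix}
\qquad\text{and}\qquad
T_{10}^{-1}\begin{pmatrix}y_1\\y_2\end{pmatrix}=\begin{pmatrix}-y_1+y_2\\-y_2\end{pmatrix}
\]
and leaves the verification of \textbf{D}, \textbf{E}, \textbf{F} as a straightforward computation, exactly parallel to Lemma~\ref{lem:preserveC}. Your conjugation identities $J\,T_{101}^{-1}\,J=T_{101}$ and $J\,T_{10}^{-1}\,J=T_{01}$ are correct (I checked them), and they do make the bookkeeping automatic, so the reduction to Lemma~\ref{lem:preserveC} goes through cleanly. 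The advantage of your route is conceptual clarity about why the leftward lemma is the mirror of the rightward one; the paper's route has the virtue of keeping the argument self-contained and displaying the explicit growth mechanism.

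There is, however, a computational slip in your ``alternative'' direct approach: your matrix for $T_{10}^{-1}$ is wrong. Since $T_{10}=T_0T_1=\begin{pmatrix}-1&-1\\0&-1\end{pmatrix}$, one has $T_{10}^{-1}=\begin{pmatrix}-1&1\\0&-1\end{pmatrix}$, not $\begin{pmatrix}-1&0\\-1&-1\end{pmatrix}$. With your (incorrect) matrix the action is $(z_1,z_2)=(-y_1,-y_1-y_2)$, which gives $|z_1|=|y_1|$ and would make \textbf{D} fail outright; it is also \emph{not} the coordinate-swapped version of the $T_{01}$ formula, contrary to what you assert. The correct action $(z_1,z_2)=(-y_1+y_2,-y_2)$ is the paper's formula and is indeed the swap of $T_{01}$'s, consistent with your symmetry identity. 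So if you opt for the direct computation, fix that matrix; if you stick with the $J$-conjugation argument, you are fine.
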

\begin{proof}
This is again a straightforward computation using
\[
T_{101}^{-1}
\begin{pmatrix}
y_1 \\ y_2
\end{pmatrix}
=
\begin{pmatrix}
y_1+ y_2 + y_1 \\ -y_1
\end{pmatrix}
\qquad\text{and}\qquad
T_{10}^{-1} 
\begin{pmatrix}
y_1 \\ y_2
\end{pmatrix}
=
\begin{pmatrix}
-y_1 + y_2 \\ -y_2
\end{pmatrix}. \qedhere
\] 
\end{proof}

As before, property {\bf D} states a growth (in modulus) of the first component. Property {\bf E} ensures that
the amount of this growth is non-decreasing in further applications of $T_w^{-1}$ with $w\in\{101,10\}$, 
and the fact that property {\bf F} is preserved makes sure that the same argument keeps working for 
further applications of $T_w^{-1}$, leading to unbounded growth.

So again, we just need a first ocurrence of property {\bf F} for a vector $\vekz{x_n}{x_{n+1}}$
with $n<0$ or a first ocurrence of property {\bf C} for a vector $\vekz{x_n}{x_{n+1}}$ with $n\ge 0$
in our computation of a sequence $x=(\dots,x_{-2},x_{-1},x_0,x_1,x_2,\dots)$ that solves $Bx=0$.
Then $x$ will be unbounded.

%

This time, one entry, say $x_0$, does not determine the whole sequence $x$, but two entries do.
Let the two entries of $x$ that are associated to the entries $0$ and $1$ of $w_0$ in \eqref{eq:101,01,10} be equal
to $\alpha$ and $\beta$, respectively, with arbitrary $\alpha, \beta \in \R$. W.l.o.g~label them as $x_0$ and $x_1$.

Using the adjacent entries of $w_0=01$ in $b$, see \eqref{eq:101,01,10}, the corresponding entries in $x$ turn out to be
\[
\begin{pmatrix} x_{-4} & \cdots	& x_{5} \end{pmatrix}=
\begin{pmatrix} \alpha-2\beta 	& \beta 	& -\alpha+\beta & -\beta & \alpha 	& \beta & -\alpha-\beta & \alpha 	& \alpha+\beta 	& -2\alpha-\beta \end{pmatrix}.
\]

With respect to $\alpha$ and $\beta$, we have to distinguish the following cases:
\begin{enumerate}\itemsep0mm
\item If $\alpha = \beta=0$ then $x=0$ follows.
\item If $\alpha = 0$ and $\beta\neq 0$ then $\vekz{x_{-4}}{x_{-3}}=\vekz{-2\beta}\beta$ has property {\bf F}.
\item If $\alpha \neq 0$ and $\beta = 0$ then $\vekz{x_{4}}{x_{5}}=\vekz{\alpha}{-2\alpha}$ has property {\bf C}.
\item If $\alpha\neq 0$ and $\beta\neq 0$ we have to look at two more cases: 
	\begin{enumerate}
	\item If $\alpha \cdot\beta>0$ then $\vekz{x_{1}}{x_{2}}=\vekz{\beta}{-\alpha-\beta}$ has property {\bf C}.
	\item If $\alpha \cdot\beta<0$ then $\vekz{x_{-4}}{x_{-3}}=\vekz{\alpha-2\beta}\beta$ has property {\bf F}.
	\end{enumerate}
\end{enumerate}
This completes the study of all cases.
Each nontrivial solution of the homogenous equation $Bx=0$ is unbounded, thus $B$ is injective on $\ell^\infty(\Z)$.
This completes the proof of Lemma \ref{lem:inj}.



\medskip

{\bf Acknowledgements.}
The first author thanks the organizers, Albrecht B\"ottcher, Daniel Potts, Peter Stollmann and David Wenzel,
for a wonderful IWOTA conference 2017 in Chemnitz with many inspiring talks and countless other pleasant moments.

\end{document}